\documentclass[12pt]{amsart}
\usepackage{amsmath,amsfonts,latexsym,graphicx,amssymb,url}
\usepackage{amsmath,txfonts,pifont,bbding,pxfonts,manfnt}
\usepackage{psfrag}
\usepackage{color}
\usepackage{mathrsfs}
\usepackage{wrapfig, subfigure,graphicx}
\usepackage{hyperref}
\usepackage[active]{srcltx}
\usepackage{breqn}
\usepackage{pdfsync}
\usepackage{pdflscape}
\setlength{\headheight}{15pt} 
\setlength{\topmargin}{10pt}
\setlength{\headsep}{30pt} 
\setlength{\textwidth}{15cm} 
\setlength{\textheight}{21.5cm}
\setlength{\oddsidemargin}{1cm} 
\setlength{\evensidemargin}{1cm} 

\newcommand{\R}{{\mathbb R}} 

\newcommand{\E}{{\mathbf E}}
\newcommand{\B}{{\mathbf H}}
\def \S {{\mathbf S}}
\def \r {{\mathbf r}}

\theoremstyle{plain}
\newtheorem{theorem}{Theorem}[section]

\newtheorem{proposition}[theorem]{Proposition}

\newtheorem{remark}[theorem]{Remark}




\interfootnotelinepenalty=10000
\begin{document}

\setcounter{equation}{0}










\title[Uniform Refraction in NIMs]
{Uniform Refraction in Negative Refractive Index Materials}
\author[C. E. Guti\'errez and E. Stachura]
{Cristian E. Guti\'errez and Eric Stachura}
\thanks{\today\\The first author is partially supported
by NSF grant DMS--1201401.}
\address{Department of Mathematics\\Temple University\\Philadelphia, PA 19122}
\email{gutierre@temple.edu}
\address{Department of Mathematics\\Temple University\\Philadelphia, PA 19122}
\email{eric.stachura@temple.edu}

\maketitle
\begin{abstract}
We study the problem of constructing an optical surface separating two homogeneous, isotropic media, one of which has a negative refractive index. In doing so, we develop a vector form of Snell's law, which is used to study surfaces possessing a certain uniform refraction property, both in the near and far field cases. In the near field problem, unlike the case when both materials have positive refractive index, we show that the resulting surfaces can be neither convex nor concave. 
\end{abstract}
\tableofcontents

\section{Introduction}
Given a point $O$ inside medium $I$ and a point $P$ inside medium $II$, the question we consider is whether we can find an interface surface $\Gamma$ separating medium $I$ and medium $II$ that refracts all rays emanating from the point $O$ into the point $P$. Suppose that medium $I$ has index of refraction $n_1$ and medium $II$ has index of refraction $n_2$. In the case that both $n_1, n_2 > 0$, this question has been studied extensively (see in particular \cite{CGQH1} and the references therein). This is the so called near field problem. The far field problem is as follows: given a direction $m$ fixed in the unit sphere in $\R^3$, if rays of light emanate from the origin inside medium I, what is the surface $\Gamma$, interface between media I and II, that refracts all these rays into rays parallel to $m$? In the case that both $n_1, n_2 > 0$ this problem has also been studied extensively \cite{CGQH2}. 

A natural question in this direction is what happens in the case when one of the media has a negative index of refraction? These media, as of now still unknown to exist naturally, constitute the so called Negative Refractive Index Materials (NIM) or left-handed materials. The theory behind these materials was developed by V. G. Veselago in the late 1960's \cite{Ves1968}, yet was put on hold for more than 30 years until the existence of such materials was shown by a group at the University of California at San Diego \cite{Shelby2001}.

This note is devoted to studying surfaces having the uniform refraction property (as discussed above) in the case when $\kappa := n_2 / n_1 < 0$, both in the far field and near field cases. We begin discussing in Section \ref{subsect:intronegrefractivematerials} the notion of negative refraction, and in Section \ref{subject:snelllawinvectorform} we formulate Snell's law in vector form for NIM.
Next, in Section \ref{sec:surfacesuniformrefraction} we find the surfaces having  the uniform refraction property in the setting of NIMs. In contrast with standard materials, i.e. those having positive refractive indices, the surfaces for NIMs in the near field case can neither be convex nor concave; this is analyzed in detail in Section \ref{subsect:convexityofsurfaces}.

Finally, in Section \ref{sec:maxwellandfresnelformulas} using the Snell law described in Section \ref{subject:snelllawinvectorform}, we study the Fresnel formulas of geometric optics, which gives the amount of radiation transmitted and reflected through a surface separating two homogeneous and isotropic media, one of them being a NIM medium.  
We expect that the study in this paper will be useful in the design of surfaces refracting energy in a prescribed manner for NIMs as it is done for standard materials in \cite{CGQH1}, \cite{CGQH2}, and \cite{CG2013}.

%
%

\section{Geometric Optics in Negative Refractive Index Materials}

\subsection{Introduction to Negative Refractive Index Materials}\label{subsect:intronegrefractivematerials}
The notion of negative refraction goes back to the work of V. G. Veselago in \cite{Ves1968}. In the Maxwell system of electromagnetism, the material parameters $\epsilon, \mu$ are characteristic quantities which determine the propagation of light in matter. If $\textbf{E} = A \cos( \textbf{r} \cdot \textbf{k} + \omega t)$, then in the case of isotropic material the dispersion relation \cite[Formula (7.9)]{jackson:electrodynamics} is given by 

$$|{\bf k}|^2 = \left(\dfrac{\omega}{c}\right)^2 n^2$$ with $n^2$ the square of the index of refraction of the material, which takes the form

\begin{align} \label{eq:squared}
n^2 = \epsilon \mu
\end{align}
Veselago showed that in the case when both $\epsilon , \mu < 0$, we are forced by the Maxwell system to take the negative square root for the refractive index, i.e. 

\begin{align} \label{negsqroot}
n = -\sqrt{\epsilon \mu}
\end{align}
In particular he observed that a slab of material with

\begin{align} \label{vesmaterial}
\epsilon = -1 \quad \text{ and } \quad \mu = -1
\end{align}
would have refractive index $n = -1$ and behave like a lens. 
A remarkable property of this material was shown in \cite{Pendry2000}: that the focusing is perfect provided the condition (\ref{vesmaterial}) is exactly met. 
The theory behind NIMs opened the door to study the so called \emph{metamaterials}, see e.g. \cite{RGbook} and \cite{Metabook}.

 


We end this section by making some remarks on some important differences in geometric optics in the case of NIMs. One must generalize the classical Fermat principle of optics, see e.g. \cite[Sec. 3.3.2]{BW1959}, to handle materials with negative refractive index. This was done not too long ago by Veselago in \cite{Ves2002}. In particular, in the setting of NIMs, the optical length of a light ray propagating from one NIM to another is negative because the wave vector is opposite to the direction of travel of the ray. However, in this case, it is not possible a priori to determine that the path of light is a maximum or minimum of the optical length, as this depends heavily on the geometry of the problem. Finally, the Fresnel formulas of geometric optics must also be appropriately adjusted for negative refraction; we analyze this in detail at the end.

\subsection{Snell's law in vector form for $\kappa<0$}\label{subject:snelllawinvectorform}
In order to explain the Snell law of refraction for media with $\kappa < 0$, we first review this for media with positive refractive indices.

Suppose $\Gamma$ is a surface in $\R^3$ that separates two media
$I$ and $II$ that are homogeneous and isotropic, with refractive indices $n_1$ and $n_2$ respectively.
If a
ray of light\footnote{Since the refraction angle depends on the frequency of the radiation, we assume that light rays are monochromatic.} having direction $x\in S^{2}$, the unit sphere in $\R^{3}$, and traveling
through medium $I$ strikes $\Gamma$ at the point $P$, then this ray
is refracted in the direction $m\in S^{2}$ through medium $II$
according to the Snell law in vector form:
\begin{equation}\label{snellwithcrossproduct}
n_{1}(x\times \nu)=n_{2}(m\times \nu),
\end{equation} 
where $\nu$ is the unit normal to the surface to $\Gamma$ at $P$ pointing towards medium $II$; see \cite[Subsection 4.1]{luneburgoptics}. It is assumed here that $x\cdot \nu\geq 0$.

This has several consequences:
\begin{enumerate}
\item[(a)] the vectors $x,m,\nu$ are all on the same plane (called the plane of incidence);
\item[(b)] the well known Snell's law in scalar form holds: 
$$n_1\sin \theta_1= n_2\sin
\theta_2,$$ 
where $\theta_1$ is the angle between $x$ and $\nu$
(the angle of incidence), and
$\theta_2$ is the angle between $m$ and $\nu$ (the angle of refraction).
\end{enumerate}
Equation \eqref{snellwithcrossproduct} is equivalent to $(n_{1}x-n_{2}m)\times \nu=0$, which means that the
vector $n_{1}x-n_{2}m$ is parallel to the normal vector $\nu$.
If we set $\kappa=n_2/n_1$, then
\begin{equation}\label{eq:snellvectorform}
x-\kappa \,m =\lambda \nu,
\end{equation}
for some $\lambda\in \R$. Notice that \eqref{eq:snellvectorform} univocally determines $\lambda$. Taking dot products with $x$ and $m$ in \eqref{eq:snellvectorform} we get
$\lambda=\cos \theta_1-\kappa \cos \theta_2$,
$\cos \theta_1=x\cdot \nu>0$, and
$\cos \theta_2=m\cdot \nu=\sqrt{1-\kappa^{-2}[1-(x\cdot \nu)^2]}$. In fact, there holds
\begin{equation}\label{eq:formulaforlambda}
\lambda=x\cdot \nu -\kappa \,\sqrt{1-\kappa^{-2}\left(1-(x\cdot \nu)^{2}\right)}.
\end{equation}

It turns out that refraction behaves differently for $0<\kappa<1$ and for $\kappa>1$.
Indeed,
\begin{enumerate}
\item\label{item:conditionkappa<1} if $0<\kappa<1$ then for refraction to occur we need $x\cdot \nu \geq \sqrt{1-\kappa^2}$, and in \eqref{eq:formulaforlambda} we have $\lambda>0$, and the refracted vector $m$ is so that $x\cdot m\geq \kappa$;
\item\label{item:conditionkappa>1} if $\kappa>1$ then refraction always occurs and we have $x\cdot m\geq 1/\kappa$, and in \eqref{eq:formulaforlambda} we have $\lambda<0$;
\end{enumerate}
see \cite[Subsection 2.1]{CGQH2}. 

We now consider the case when either medium $I$ or medium $II$ has negative refractive index, so that $\kappa=\dfrac{n_2}{n_1}<0$. Let us take the formulation of the Snell law as in  \eqref{eq:snellvectorform}.
This is again equivalent to
\begin{equation}\label{eq:snelllawnegative}
x-\kappa \,m=\lambda\,\nu,
\end{equation}
with $x,m$ unit vectors, $\nu$ the unit normal to the interface pointing \emph{towards medium $II$}. We will show that
\begin{equation}\label{eq:formulaforlambdakappanedagtive}
\lambda = x \cdot \nu + \sqrt{(x\cdot \nu)^2 - (1 - \kappa^2)}.
\end{equation}
In fact, taking dot products in \eqref{eq:snelllawnegative}, first with $x$ and then with $m$, yields
$$1 - \kappa\, x \cdot m = \lambda \,x \cdot \nu,\quad \text{and}\quad x \cdot m - \kappa = \lambda \,m \cdot \nu.$$ 
This implies that
$\dfrac{1}{\kappa} - \dfrac{1}{\kappa} \lambda x \cdot \nu = x \cdot m = \kappa + \lambda m \cdot \nu.$
We seek to get rid of the term $m \cdot \nu$. To do this, from \eqref{eq:snelllawnegative} we have that
$m = \dfrac{x - \lambda \nu}{\kappa}$, so 
$m \cdot \nu = \dfrac{1}{\kappa} x\cdot \nu - \dfrac{1}{\kappa} \lambda.$
By substitution, we therefore obtain the following quadratic equation in $\lambda$:
$$\lambda^2 - 2 \lambda x \cdot \nu + (1 - \kappa^2) = 0. $$ 
Solving this equation yields
$$\lambda = x \cdot \nu \pm \sqrt{ (x\cdot \nu)^2 -  (1 - \kappa^2)}.$$ 
If $-\infty<\kappa\leq -1$, then $(x\cdot \nu)^2 -  (1 - \kappa^2)\geq 0$ and the square root is real. On the other hand, if $-1<\kappa <0$, then the square root is real only if $x\cdot \nu\geq \sqrt{1-\kappa^2}$ which means that the incident direction $x$ must satisfy this condition.
It remains to check which sign $(\pm)$ to take for $\lambda$. Recalling that
$$x \cdot \nu - \kappa m \cdot \nu = \lambda = x \cdot \nu \pm \sqrt{(x \cdot \nu)^2 - (1 - \kappa^2)},$$
we see, since $\kappa < 0$ and $m\cdot \nu\geq 0$, that we must take the positive square root. Hence we conclude
\eqref{eq:formulaforlambdakappanedagtive}.
Notice that in contrast with the case when $\kappa>0$, the value of $\lambda$ given in \eqref{eq:formulaforlambdakappanedagtive} is always positive when $\kappa<0$, and it can be also written in the following from 
\[
\lambda=x\cdot \nu +|\kappa| \,\sqrt{1-\kappa^{-2}\left(1-(x\cdot \nu)^{2}\right)},
\]
which is similar to \eqref{eq:formulaforlambda}. In the "reflection" case $\kappa = -1$, since $x\cdot \nu>0$, 
this formula yields 
%
%
%
%
%
$$\lambda = 2 \,x \cdot \nu,$$
and hence the ``reflected" vector $m$ is given by
$$m = 2\, (x \cdot \nu)\, \nu - x.$$
That is, after striking the interface $\Gamma$ the wave with direction $m$ travels in the material with refractive index $n_2$. 

Note that the vector formulation \eqref{eq:snelllawnegative} is compatible with the Snell law for negative refractive index (\cite{Ves2003}).
Indeed, taking the cross product in \eqref{eq:snelllawnegative} with the normal $\nu$ yields
$x\times \nu=\kappa\,m\times \nu$, and then taking absolute values yields 
\[
\sin \theta_1=-\kappa \,\sin \theta_2,
\]
where $\theta_1$ is the angle between $x$ and $\nu$, and $\theta_2$ is the angle between $m$ and $\nu$.

\begin{figure}[h]
\begin{center}
\includegraphics[scale=0.5]{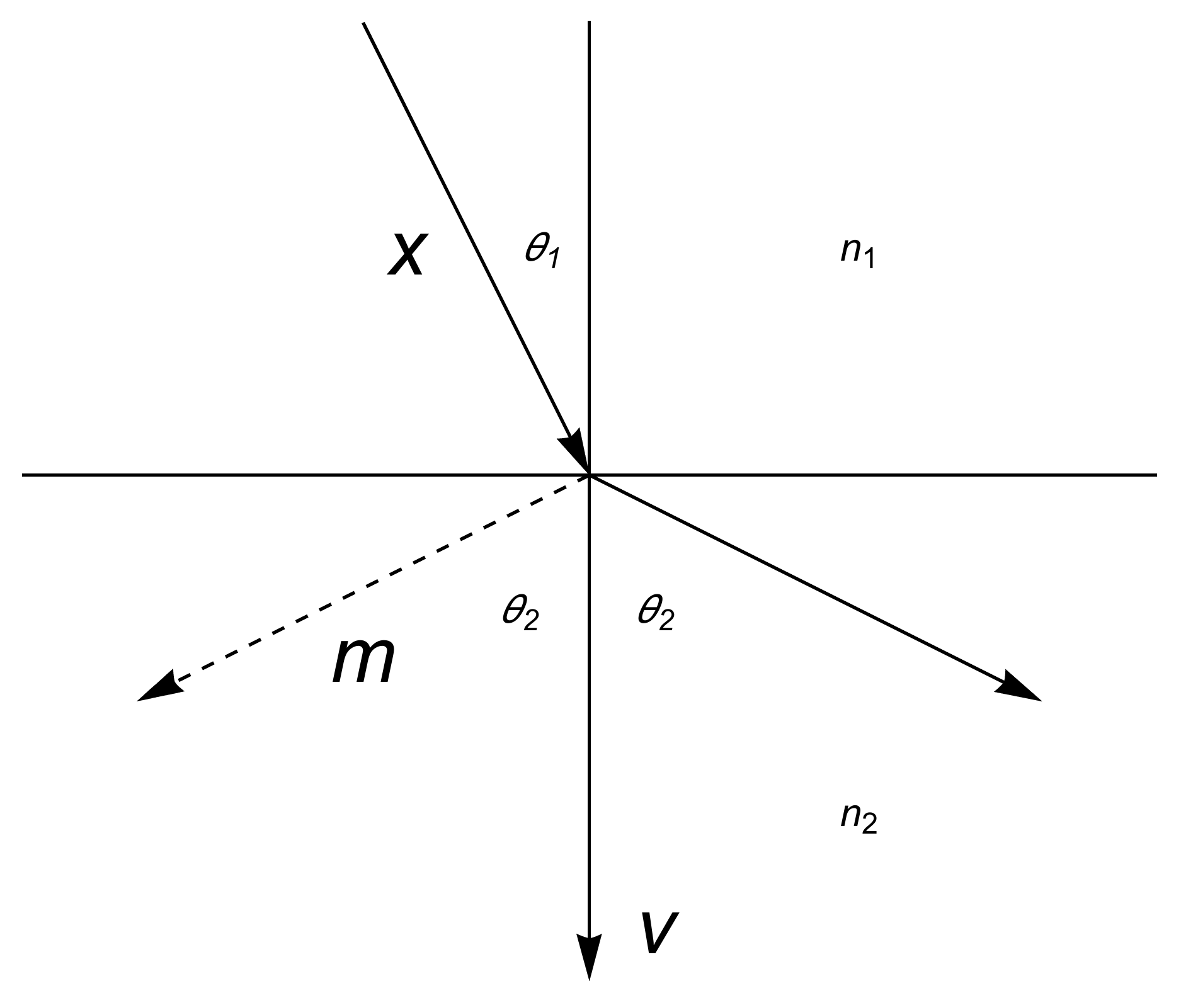}
\caption{Snell's law in the case of normal refraction and negative refraction (dotted arrow).}
\label{Snell}
\end{center}
\end{figure}

 
\subsection{Physical condition for refraction when $\kappa<0$}

We analyze here conditions under which an electromagnetic wave is transmitted from medium $I$ to $II$ and there is no total internal reflection.

\subsubsection{Case when $-1\leq \kappa <0$.}\label{physicalconditionforrefractionkappa<0}\label{subsubsec:physicalconditionforrefraction}

The maximum angle of refraction is $\theta_2=\pi/2$ which is attained when
$\sin \theta_1=-\kappa$, that is, the critical angle is $\theta_c=\arcsin (-\kappa)$. If $\theta_1>\theta_c$, there is no refraction.
So $\theta_2=\arcsin \left( -\dfrac{1}{\kappa}\sin \theta_1\right)$, for $0\leq \theta_1\leq \theta_c$.
The dot product $x\cdot m=\cos (\theta_1+\theta_2)$.
Let $h(\theta_1)=\theta_1+\theta_2=\theta_1+\arcsin \left( -\dfrac{1}{\kappa}\sin \theta_1\right)$.
We have
\[
h'(\theta_1)=1-\dfrac{1}{\kappa}\cos \theta_1\,\dfrac{1}{\sqrt{1-\sin^2\theta_1/\kappa^2}}\geq 0
\]
for $0\leq \theta_1\leq \theta_c$, and so $h$ is increasing on $[0,\theta_c]$ and therefore 
\[
\theta_1+\theta_2\leq \theta_c+\pi/2,\quad
\text{for $0\leq \theta_1\leq \theta_c$}.
\]
Then the physical constraint for refraction is
\begin{equation}\label{eq:constraint_kappa_lessthan0}
x\cdot m=\cos (\theta_1+\theta_2)\geq \cos (\theta_c+\pi/2) = -\sin \theta_c =\kappa.
\end{equation}

\subsubsection{Case when $\kappa <-1$.}
In this case $0\leq \theta_1\leq \pi/2$, and since $\theta_2=\arcsin\left( -\dfrac{1}{\kappa}\sin \theta_1\right)$,
the maximum angle of refraction is when $\theta_1=\pi/2$, that is, the maximum angle of refraction is 
$\theta_r=\arcsin (-1/\kappa)$.
Now $h(\theta_1)$ is increasing on $[0,\pi/2]$, so $\theta_1+\theta_2\leq \pi/2 +\arcsin (-1/\kappa)$,
and 
\begin{equation}\label{eq:constraint_kappa_lessthan-1}
x\cdot m=\cos (\theta_1+\theta_2)\geq \cos \left( \pi/2 +\arcsin (-1/\kappa)\right)=\dfrac{1}{\kappa}.
\end{equation}

\section{Surfaces with Uniform Refraction Property}\label{sec:surfacesuniformrefraction}

\subsection{The Far Field Problem}
Let $m\in S^2$ be fixed. We ask the following question: 
if rays of light emanate from the origin inside medium $I$,
what is the surface $\Gamma$, interface between media $I$ and $II$,
that refracts all these rays into rays parallel to $m$?
Here $I$ has refractive index $n_1$, $II$ has refractive index $n_2$ such that $\kappa=n_2/n_1<0$.

This question can be answered using the vector form of Snell's law.
Suppose $\Gamma$ is parameterized by the polar representation
$\rho(x)x$ where $\rho>0$ and $x\in S^2$.
Consider a curve on $\Gamma$ given by $r(t)=\rho(x(t)) x(t)$
for $x(t)\in S^2$.
According to \eqref{eq:snelllawnegative}, the tangent vector $r'(t)$ to
$\Gamma$ satisfies $r'(t)\cdot (x(t)-\kappa \,m)=0$.
That is,
$\Big([\rho(x(t))]'x(t)+\rho(x(t)) x'(t)\Big)\cdot (x(t)-\kappa \,m)=0$,
which yields
$\left( \rho(x(t)) (1-\kappa m\cdot x(t))\right)'=0$. Therefore
\begin{equation}\label{eq:refractingellipsoid}
\rho(x)=\dfrac{b}{1-\kappa \,m\cdot x}
\end{equation}
for $x\in S^2$ and for some $b\in \R$.
To see what is the surface described by  \eqref{eq:refractingellipsoid},
we assume first that $-1<\kappa<0$. Since $m\cdot x\geq -1$, we have $1-\kappa\,m\cdot x\geq 1+\kappa>0$, 
and so $b>0$.

Suppose for simplicity that $m=e_3$, the third-coordinate vector.
If $y=(y',y_3)\in \R^3$ is a point on $\Gamma$, then
$y=\rho(x)x$ with $x=y/|y|$.
From \eqref{eq:refractingellipsoid}, $|y|-\kappa\,y_3=b$, that is,
$|y'|^2+y_3^2=(\kappa\,y_3+b)^2$ which yields
$|y'|^2 + (1-\kappa^2)y_3^2 -2\kappa b y_3=b^2$.
This equation can be written in the form
\begin{equation}\label{eq:ellipsoid}
\dfrac{|y'|^2}{\left(\dfrac{b}{\sqrt{1-\kappa^2}} \right)^2} +
\dfrac{\left( y_3 -\dfrac{\kappa b}{1-\kappa^2}\right)^2}
{\left(\dfrac{b}{1-\kappa^2} \right)^2}=1
\end{equation}
which is an ellipsoid of revolution about the $y_3$-axis with upper focus $(0,0)$, and lower focus
$\left(0,2\kappa b/(1-\kappa^2)\right)$.
Since $|y|=\kappa y_3+b$ and the physical constraint for refraction
\eqref{eq:constraint_kappa_lessthan0},
$\dfrac{y}{|y|}\cdot e_3\geq \kappa$ is equivalent to
$y_3\geq \dfrac{\kappa b}{1-\kappa^2}$. That is,
for refraction to occur $y$ must be in the upper half of the ellipsoid \eqref{eq:ellipsoid}; we denote this semi-ellipsoid by
$E(e_3,b)$.
To verify that $E(e_{3},b)$ has the uniform refracting property,
that is, it refracts any ray emanating from the origin in the direction $e_{3}$,
we check that \eqref{eq:snelllawnegative} holds at each point.
Indeed, if $y\in E(e_3,b)$, then $\left(\dfrac{y}{|y|}-\kappa e_3\right)\cdot \dfrac{y}{|y|}=
1-\kappa e_3\cdot \dfrac{y}{|y|}\geq 1+\kappa>0$,
and $\left(\dfrac{y}{|y|}-\kappa e_3\right)\cdot e_3\geq 0$,
and so $\dfrac{y}{|y|}-\kappa e_3$ is an outward normal to $E(e_3,b)$ at $y$.

%

\begin{figure}[ht]
\centering
\begin{minipage}[b]{0.45\linewidth}
\includegraphics[scale=0.2,angle=90]{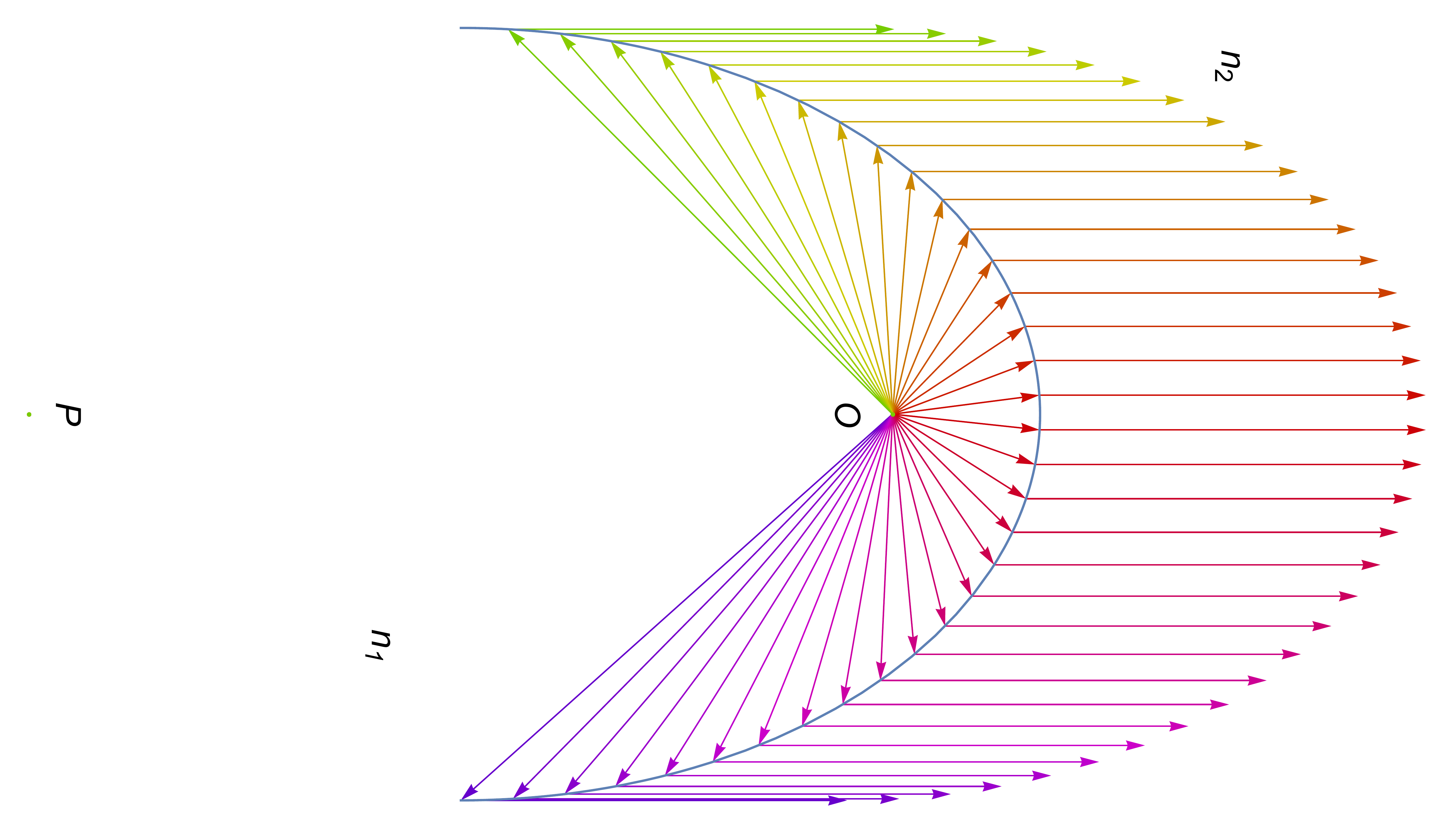}
\caption{Ellipsoid refracting when $-1<\kappa<0$}
\label{fig:negativeellipsoid}
\end{minipage}
\quad
\begin{minipage}[b]{0.45\linewidth}
\includegraphics[scale=0.2,angle=90]{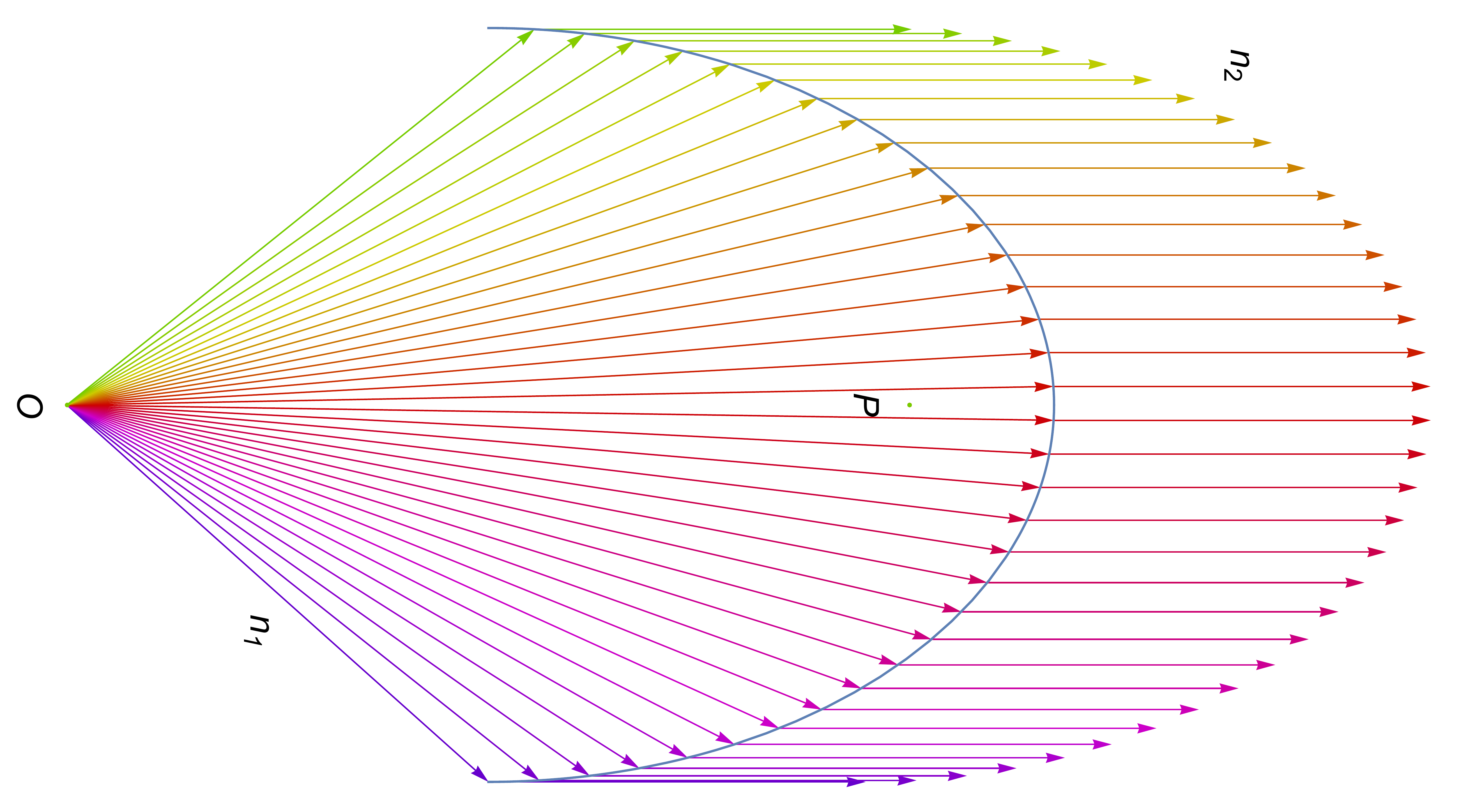}
\caption{Ellipsoid refracting when $0<\kappa<1$}
\label{fig:positiveellipsoid}
\end{minipage}
\end{figure}

Rotating the coordinates, it is easy to see that the surface given by
\eqref{eq:refractingellipsoid} with $-1<\kappa<0$ is an
ellipsoid of revolution about the axis of direction $m$ with upper focus
$(0,0)$ and lower focus $\dfrac{2\kappa b}{1-\kappa^2}m$.
Moreover, the semi-ellipsoid $E(m,b)$ given by
\begin{equation}\label{def:supportingellipsoid}
E(m,b)=\left\{\rho(x)x: \rho(x)= \dfrac{b}{1-\kappa \,m\cdot x},\
x\in S^2,\  x\cdot m\geq \kappa\right\},
\end{equation}
has the uniform refracting property: any ray emanating from
the origin $O$ is refracted in the direction $m$, where $-1<\kappa <0$.
See Figures \ref{fig:negativeellipsoid} and \ref{fig:positiveellipsoid} comparing refraction when $\kappa<0$ versus $\kappa>0$, where $P=\left(0,2\kappa b/(1-\kappa^2)\right)$ and $m=e_3$.

Let us now assume that $\kappa=-1$. 
From \eqref{eq:refractingellipsoid} we have $|y|+y_3=b$, and so $|y'|^2+y_3^2=b^2-2by_3+y_3^2$, and therefore
\[
y_3=\dfrac{1}{2b}\left( b^2-|y'|^2\right).
\]
This is a paraboloid with axis $e_3$ and focus at $O$.
Since $\kappa=-1$, from the physical constraint for refraction \eqref{eq:constraint_kappa_lessthan0} we get any ray with direction
$x\in S^2$ emanating from the origin is refracted by the paraboloid.

Now we turn to the case $\kappa<-1$.  Due to the physical constraint of
refraction \eqref{eq:constraint_kappa_lessthan-1}, we must have $b>0$ in \eqref{eq:refractingellipsoid}; we determine what kind of surface this is.
Define for $b>0$
\begin{equation}\label{eq:refractinghyperboloid}
H(m,b)=\left\{\rho(x)x: \rho(x)= \dfrac{b}{1-\kappa \,m\cdot x},\
x\in S^2, \ x\cdot m\geq 1/\kappa\right\}.
\end{equation}
We claim that $H(m,b)$ is the sheet with opening in direction $m$
of a hyperboloid of revolution of two sheets about the axis of direction $m$.
To prove the claim, set for simplicity $m=e_3$.
If $y=(y',y_3)\in H(e_3,b)$, then
$y=\rho(x)x$ with $x=y/|y|$.
From \eqref{eq:refractinghyperboloid}, $-\kappa\,y_3+|y|=b$,
and therefore
$|y'|^2+y_3^2=(\kappa\,y_3+b)^2$ which yields
$|y'|^2 - (\kappa^2-1)\left[ \left(y_3 +
\dfrac{\kappa b}{\kappa^2 -1}\right)^2 -
\left( \dfrac{\kappa b}{\kappa^2-1}\right)^2 \right]=b^2$.
Thus, any point $y$ on $H(e_3,b)$ satisfies the equation
\begin{equation}\label{eq:cartisianhyperboloid}
\dfrac{\left( y_3 +\dfrac{\kappa b}{\kappa^2-1}\right)^2}
{\left(\dfrac{b}{\kappa^2-1} \right)^2}
-
\dfrac{|y'|^2}{\left(\dfrac{b}{\sqrt{\kappa^2-1}} \right)^2}
=1
\end{equation}
which represents a hyperboloid of revolution of two sheets
about the $y_3$ axis with foci
$(0,0)$ and $(0,-2\kappa b/(\kappa^2-1))$.
The sheets of this hyperboloid of revolution are given by
\[
y_3= -\dfrac{\kappa b}{\kappa^2 -1} \pm
\dfrac{b}{\kappa^2-1}\sqrt{1 + \dfrac{|y'|^2}
{\left(b/\sqrt{\kappa^2-1}\right)^2}}.
\]
We decide which one satisfies the refracting property.
The sheet with the minus sign in front of the square root satisfies 
$\kappa y_3+b\geq -\dfrac{b}{\kappa-1}>0$, and hence has polar equation
$\rho(x)=\dfrac{b}{1-\kappa\, e_3\cdot x }$, and therefore this is the sheet to consider.
For a general $m$, by a rotation, we obtain that $H(m, b)$ is the sheet
with opening in direction opposite to $m$
of a hyperboloid of revolution of two sheets about the axis
of direction $m$ with foci
$(0,0)$ and $\dfrac{-2\kappa b}{\kappa^2-1}m$.

Notice that the focus $(0,0)$ is inside the region enclosed by $H(m,b)$ and the focus
$\dfrac{-2\kappa b}{\kappa^2-1}m$ is outside that region.
The vector $\dfrac{y}{|y|}-\kappa m$ is an outer normal to $H(m,b)$ at $y$,
since by \eqref{eq:refractinghyperboloid}
\begin{align*}
\left(\dfrac{y}{|y|}-\kappa m\right)\cdot
\left(\frac{-2\kappa b}{\kappa^2-1}m-y\right)&
\geq \frac{-2 b}{\kappa^2-1}+\frac{2\kappa^2 b}{\kappa^2-1}
+\kappa m\cdot y-|y|  \\
&=\frac{-2 b}{\kappa^2-1}+\frac{2\kappa^2 b}{\kappa^2-1}-b=b>0.
\end{align*}
Clearly, $\left(\dfrac{y}{|y|}-\kappa m\right)\cdot m\geq \dfrac{1}{\kappa}-\kappa>0$ and
$\left(\dfrac{y}{|y|}-\kappa m\right)\cdot \dfrac{y}{|y|}>0$.
Therefore, $H(m,b)$ satisfies the uniform refraction property. We summarize the uniform refraction property below. 

\begin{theorem} \label{thm:uniformrefractionfarfield}
Let $n_1, n_2$ be two indices of refraction for media $I$ and $II$, respectively, and set $\kappa = n_2 /n_1$. Assume the origin is in medium $I$, and $E(m, b)$, $H(m,b)$ defined by \eqref{eq:refractingellipsoid} and \eqref{eq:refractinghyperboloid} respectively. Then:

\begin{enumerate}
\item If $-1 < \kappa < 0$ and $E(m,b)$ is the interface between medium $I$ and medium $II$, then $E(m,b)$ refracts all rays from the origin $O$ into rays in medium $II$ with direction $m$.
\item If $\kappa < -1$ and $H(m,b)$ is the interface between medium $I$ and medium $II$, then $H(m,b)$ refracts all rays from the origin $O$ into rays in medium $II$ with direction $m$. 
\end{enumerate}

\end{theorem}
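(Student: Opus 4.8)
The plan is to reduce the uniform refraction property to the pointwise vector Snell law \eqref{eq:snelllawnegative}: a surface $\Gamma$ given in polar form $\rho(x)x$, $x\in S^2$, refracts every ray from the origin into the fixed direction $m$ exactly when, at each $y=\rho(x)x\in\Gamma$, the vector $x-\kappa\,m=\dfrac{y}{|y|}-\kappa\,m$ is a positive multiple of the \emph{outward} unit normal $\nu$ to $\Gamma$ at $y$ (so that $\lambda>0$, as forced by \eqref{eq:formulaforlambdakappanedagtive}), and the physical refraction constraint \eqref{eq:constraint_kappa_lessthan0} (resp. \eqref{eq:constraint_kappa_lessthan-1}) holds at $y$. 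So the proof has three ingredients: (i) derive the polar equation any such $\Gamma$ must satisfy; (ii) identify the resulting surface; (iii) verify the normal direction and the refraction constraint on the admissible piece. For (i) I parametrize a curve on $\Gamma$ by $r(t)=\rho(x(t))x(t)$; the tangency condition $r'(t)\cdot(x(t)-\kappa\,m)=0$ rearranges to $\big(\rho(x(t))(1-\kappa\,m\cdot x(t))\big)'=0$, whence $\rho(x)=\dfrac{b}{1-\kappa\,m\cdot x}$ for a constant $b$, i.e. \eqref{eq:refractingellipsoid}; thus $E(m,b)$ and $H(m,b)$ are the only candidates.

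For part (1), with $-1<\kappa<0$: since $m\cdot x\ge-1$ one has $1-\kappa\,m\cdot x\ge 1+\kappa>0$, so $\rho>0$ forces $b>0$ and $\rho$ is smooth on all of $S^2$. Taking $m=e_3$ after a rotation and writing $y=(y',y_3)$, the relation $|y|-\kappa y_3=b$ squares to $|y'|^2+(1-\kappa^2)y_3^2-2\kappa b y_3=b^2$, which completes the square to the ellipsoid of revolution \eqref{eq:ellipsoid} with foci $O$ and $\tfrac{2\kappa b}{1-\kappa^2}e_3$; the constraint $x\cdot m\ge\kappa$ from \eqref{eq:constraint_kappa_lessthan0} is equivalent to $y_3\ge\tfrac{\kappa b}{1-\kappa^2}$, i.e. $y$ lies on the upper half $E(e_3,b)$. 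The normal check is then immediate: $\Big(\dfrac{y}{|y|}-\kappa e_3\Big)\cdot\dfrac{y}{|y|}=1-\kappa\,e_3\cdot\dfrac{y}{|y|}\ge 1+\kappa>0$ and $\Big(\dfrac{y}{|y|}-\kappa e_3\Big)\cdot e_3\ge 0$, so $\dfrac{y}{|y|}-\kappa e_3$ is the outward normal to the convex ellipsoid at $y$ with $\lambda>0$; hence \eqref{eq:snelllawnegative} holds there and the ray is refracted into $e_3$. Rotating back gives the claim for general $m$. The boundary case $\kappa=-1$ is the same argument with the ellipsoid degenerating to the paraboloid $y_3=\tfrac{1}{2b}(b^2-|y'|^2)$, where \eqref{eq:constraint_kappa_lessthan0} imposes no restriction on $x$.

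For part (2), with $\kappa<-1$: now $1-\kappa\,m\cdot x$ can vanish, and the physical constraint \eqref{eq:constraint_kappa_lessthan-1}, $x\cdot m\ge 1/\kappa$, is precisely what guarantees $b>0$ and $\rho>0$ on the admissible set. With $m=e_3$, the relation $-\kappa y_3+|y|=b$ squares and completes to the two-sheeted hyperboloid of revolution \eqref{eq:cartisianhyperboloid} with foci $O$ and $-\tfrac{2\kappa b}{\kappa^2-1}e_3$; of its two sheets, only the one with the minus sign in front of the square root satisfies $\kappa y_3+b>0$, hence admits the polar equation \eqref{eq:refractinghyperboloid}, and this is $H(e_3,b)$, the sheet opening opposite to $e_3$. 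For the normal one uses the key inequality $\Big(\dfrac{y}{|y|}-\kappa e_3\Big)\cdot\Big(\dfrac{-2\kappa b}{\kappa^2-1}e_3-y\Big)\ge b>0$, obtained by expanding and using $-\kappa y_3+|y|=b$ together with $y_3\ge 0$ on $H(e_3,b)$; this says $\dfrac{y}{|y|}-\kappa e_3$ points away from the far focus, hence out of the region enclosed by $H(e_3,b)$, which contains $O$ but not that focus. Combined with $\Big(\dfrac{y}{|y|}-\kappa e_3\Big)\cdot e_3\ge 1/\kappa-\kappa>0$ and $\Big(\dfrac{y}{|y|}-\kappa e_3\Big)\cdot\dfrac{y}{|y|}>0$, this identifies $\dfrac{y}{|y|}-\kappa e_3$ as the outward normal with $\lambda>0$, so \eqref{eq:snelllawnegative} holds and the ray is refracted into $e_3$; rotating back finishes part (2).

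The only genuinely non-mechanical point is fixing the \emph{orientation} of the normal, i.e. showing $\dfrac{y}{|y|}-\kappa\,m$ points outward rather than inward. For the ellipsoid this is cheap, following from $\big(\tfrac{y}{|y|}-\kappa m\big)\cdot\tfrac{y}{|y|}>0$ and convexity; for the hyperboloid sheet it genuinely requires the far-focus inequality above, which is the one place where the algebra of \eqref{eq:refractinghyperboloid} and the constraint $x\cdot m\ge 1/\kappa$ are indispensable. Everything else — integrating the tangency ODE, completing the square, selecting the correct sheet — is routine.
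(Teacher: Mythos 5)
Your proposal follows the paper's proof essentially step for step: the same tangency ODE giving $\rho(x)=\dfrac{b}{1-\kappa\,m\cdot x}$, the same completion of the square identifying the ellipsoid (resp.\ the two-sheeted hyperboloid) and the choice of sheet, the same use of the physical constraints \eqref{eq:constraint_kappa_lessthan0}, \eqref{eq:constraint_kappa_lessthan-1} to cut out the admissible piece, and the same three dot-product checks --- including the far-focus inequality --- showing that $\dfrac{y}{|y|}-\kappa m$ is an outward normal, so that \eqref{eq:snelllawnegative} holds with $\lambda>0$.

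One justification needs repair. You assert the far-focus inequality is ``obtained by expanding and using $-\kappa y_3+|y|=b$ together with $y_3\ge 0$ on $H(e_3,b)$,'' but $y_3\ge 0$ is false on $H(e_3,b)$: the admissible set is $x\cdot e_3\ge 1/\kappa$ with $1/\kappa\in(-1,0)$, so directions with $1/\kappa\le x\cdot e_3<0$, hence points with $y_3<0$, do belong to $H(e_3,b)$, and your argument as stated only covers the portion of the sheet with $y_3\ge 0$. The correct estimate --- the one the paper makes --- bounds the term $\dfrac{-2\kappa b}{\kappa^2-1}\,\dfrac{y_3}{|y|}$ from below using $\dfrac{y_3}{|y|}\ge \dfrac{1}{\kappa}$ (note $\dfrac{-2\kappa b}{\kappa^2-1}>0$), and then uses the polar relation $\kappa y_3-|y|=-b$; this yields precisely the lower bound $b>0$. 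Since you yourself remark at the end that the constraint $x\cdot m\ge 1/\kappa$ is indispensable at this point, this is a local slip rather than a structural gap, but the inequality should be justified by that constraint, not by a sign condition on $y_3$.
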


\subsection{The Near Field Problem}
Given a point $O$ inside medium $I$ and a point $P$ inside medium $II$, we construct a surface $\Gamma$ separating medium $I$ and medium $II$ such that $\Gamma$ refracts all rays emanating from $O$ into the point $P$. 
Suppose $O$ is the origin, and let $X(t)$ be a curve on $\Gamma$. Recall Snell's law says that
$$x - \kappa m = \lambda \nu$$ where $\nu$ is the normal to the surface $\Gamma$, and we are considering the case when $\kappa < 0$. Then via Snell's law, we see that the tangent vector $X'(t)$ must satisfy
$$X'(t) \cdot \left( \frac{X(t)}{|X(t)|} - \kappa \frac{P-X(t)}{|P - X(t)|} \right) = 0.$$
That is to say,
$$|X(t)|' + \kappa|P - X(t)|' = 0.$$
Hence $\Gamma$ is the surface
\begin{equation}\label{eq:ovalnegativek}
|X| + \kappa |X - P| = b.
\end{equation}

Compare this surface to the Cartesian ovals which occur in the near field case when $\kappa > 0$. In this case, since the function $F(X) = |X| + \kappa |X - P|$ is convex and the ovals are the level sets of $F$, then the ovals are convex. However, in the case when $\kappa < 0$, the function $F(X)$ is no longer convex and its level sets are convex sets only for a range of values in the parameter $b$. That is, the surface \eqref{eq:ovalnegativek} is in general not a convex surface, see Theorem \ref{klessthan0} and Figure \ref{fig:ovalskappa=-.5}.

{\bf Case $-1 < \kappa < 0$.} We will show that the surface \eqref{eq:ovalnegativek} is 
nonempty if and only if   
$b$ is bounded from below.
In fact, this follows by the triangle inequality; since $1+\kappa>0$, we see that 
$|X|+\kappa \,|X-P|\geq \kappa\,|P|$ for all $X$.
Therefore if the surface \eqref{eq:ovalnegativek} is nonempty then
\begin{align} \label{bgeqkappap}
b \geq \kappa |P|.
\end{align}
Viceversa, if $b$ satisfies \eqref{bgeqkappap}, then the surface \eqref{eq:ovalnegativek} is nonempty.
Let $X_0$ be of the form $X_0= \lambda P$. We will find $\lambda>0$ so that $X_0$ is on the surface defined by \eqref{eq:ovalnegativek}. 
This happens if
$\lambda\, |P| + \kappa \,|\lambda - 1|\,|P| = b.$
Suppose $b> |P|$. Then we must have $\lambda> 1$. So
$b =  \lambda \,|P| + \kappa \,(\lambda -1)\,|P|$, and 
solving for $\lambda$ gives
$$\lambda = \frac{b + \kappa |P|}{(1+\kappa)\,|P| }.$$
Now suppose  $\kappa |P| \leq b \leq |P|$, and
let
$$\lambda = \frac{b - \kappa |P|}{(1-\kappa)\,|P| }.$$
We then have $0<\lambda\leq 1$ and $X_0=\lambda \,P$ is on the surface \eqref{eq:ovalnegativek}.
\footnote{Notice that the set $\{X:|X| + \kappa |X - P| \leq \kappa|P|+\delta\}$, $\delta>0$ is contained in the
the ball with center zero and radius $\delta/(1+\kappa)$.  
}

{\bf Case $\kappa=-1$.}
In this case to have a nonempty surface we show that $b$ must be bounded above and below.
If the surface is nonempty then 
by application of the triangle inequality we get that
\begin{equation}\label{eq:boundsforbwhenkappaequals-1}
-|P|\leq b\leq |P|.
\end{equation}
Viceversa, if \eqref{eq:boundsforbwhenkappaequals-1} holds, then the surface is nonempty.
In fact, the point $X_0=\lambda P$ with $\lambda=\dfrac{b + |P|}{2\, |P|}$ belongs to the surface.


{\bf Case $\kappa<-1$.} 
We will show that the surface \eqref{eq:ovalnegativek} is 
nonempty if and only if   
$b$ is bounded above.
First notice that by the triangle inequality and the fact that $\kappa<-1$ we have that $|X|+\kappa\,|X-P|\leq |P|$ for all $X$, and therefore if \eqref{eq:ovalnegativek} is nonempty then
\begin{align}\label{upperboundforb}
b \leq |P|.
\end{align}
On the other hand, if \eqref{upperboundforb} holds then will find $X_0 = \lambda P\in \Gamma$ for some $\lambda > 0$. We want
$\lambda |P| + \kappa |P| |\lambda -1 | = b.$
If we let $\lambda=\dfrac{b + \kappa |P|}{|P| + \kappa |P|}$, then $\lambda \geq 1$ and the point $\lambda P$ is on the curve.
\footnote{Notice that the set $E=\{X:|X| + \kappa |X - P| \geq |P|-\delta\}$, $\delta>0$, is contained in the ball with center $0$ and radius $-\delta/(1+\kappa)$.
}

\subsection{Polar equation of \eqref{eq:ovalnegativek}}
Next we will find the polar equation of the refracting surface \eqref{eq:ovalnegativek} when $\kappa < 0$. We show that only a piece of this surface does the refracting job, otherwise total internal reflection occurs. This follows from the physical conditions described in Section \ref{subsubsec:physicalconditionforrefraction}.

Let $X = \rho(x)x$ with $x \in S^2$. Then we write
$$\kappa | \rho(x)x - P| = b - \rho(x).$$
Squaring both sides yields the following quadratic equation  
\begin{equation}\label{eq:quadraticeqforrho}
(1-\kappa^2)\,\rho(x)^2+2\,\left(\kappa^2\, x\cdot P -b\right)\,\rho(x)+b^2-\kappa^2\,|P|^2=0.
\end{equation}
Solving for $\rho$ we obtain
\begin{align*}
\rho(x) = \frac{(b - \kappa^2 x \cdot P) \pm \sqrt{(b - \kappa^2 x \cdot P)^2 - (1 - \kappa^2)(b^2 - \kappa^2 |P|^2)}}{1 - \kappa^2}.
\end{align*}
Define an auxiliary quantity
$$\Delta (t) := (b - \kappa^2 t)^2 - (1 - \kappa^2)(b^2 - \kappa^2 |P|^2)$$ so that
$$\rho_\pm(x) = \frac{(b - \kappa^2 x \cdot P) \pm \sqrt{\Delta (x \cdot P)}}{1 - \kappa^2}. $$

\subsubsection{Case $-1 < \kappa < 0$}
We begin with a proposition.
\begin{proposition}\label{prop:lowerestimateofDelta}
Assume $-1 < \kappa < 0$. If $x\in S^2$, then
\begin{align}
\Delta(x \cdot P) \geq  \kappa^2 (x \cdot P - b)^2,
\end{align}
with equality if and only if $|x\cdot P|=|P|$.
Hence, the quantity $\sqrt{\Delta ( x \cdot P)}$ is well defined. 
\end{proposition}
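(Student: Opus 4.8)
The plan is to reduce the claim to the Cauchy--Schwarz inequality via a short direct computation. First I would expand the definition
\[
\Delta(t) = (b - \kappa^2 t)^2 - (1-\kappa^2)(b^2 - \kappa^2|P|^2),
\]
collecting terms in $t,b,|P|,\kappa$. After the $b^2$ and $\kappa^4|P|^2$ contributions cancel, one is left with the factored form
\[
\Delta(t) = \kappa^2\bigl(\kappa^2 t^2 - 2bt + b^2 + (1-\kappa^2)|P|^2\bigr).
\]
Next I would subtract $\kappa^2(t-b)^2 = \kappa^2\bigl(t^2 - 2bt + b^2\bigr)$; the $-2bt$ and $b^2$ terms cancel, leaving the clean identity
\[
\Delta(t) - \kappa^2(t-b)^2 = \kappa^2(1-\kappa^2)\bigl(|P|^2 - t^2\bigr).
\]

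Then I would set $t = x\cdot P$. Since $x\in S^2$, Cauchy--Schwarz gives $|x\cdot P|\le |x|\,|P| = |P|$, so $|P|^2 - (x\cdot P)^2 \ge 0$, with equality precisely when $x$ and $P$ are parallel, that is, when $|x\cdot P| = |P|$. Because $-1<\kappa<0$ we have both $\kappa^2>0$ and $1-\kappa^2>0$, hence
\[
\Delta(x\cdot P) - \kappa^2(x\cdot P - b)^2 = \kappa^2(1-\kappa^2)\bigl(|P|^2 - (x\cdot P)^2\bigr)\ge 0,
\]
which is exactly the asserted inequality together with its equality case. Finally, since $\kappa^2(x\cdot P - b)^2\ge 0$, this also yields $\Delta(x\cdot P)\ge 0$, so $\sqrt{\Delta(x\cdot P)}$ is well defined.

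There is no serious obstacle beyond careful sign bookkeeping; the one point worth flagging is that the step $1-\kappa^2>0$ uses the hypothesis $-1<\kappa<0$ essentially — for $\kappa<-1$ the sign of $1-\kappa^2$ reverses and the inequality would point the other way — which is presumably why the proposition is confined to the regime $-1<\kappa<0$ and the case $\kappa<-1$ is treated separately.
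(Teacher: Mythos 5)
Your proposal is correct and follows essentially the same route as the paper: both arguments reduce the claim to the identity $\Delta(x\cdot P)-\kappa^2(x\cdot P-b)^2=\kappa^2(1-\kappa^2)\left(|P|^2-(x\cdot P)^2\right)$ and then invoke $|x\cdot P|\le |P|$ together with $0<\kappa^2<1$. Your intermediate factoring of $\Delta(t)$ and the remark about the sign of $1-\kappa^2$ are just a cleaner bookkeeping of the same computation.
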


\begin{proof}
First notice that $| x \cdot P | \leq |P|$ comes for free since $|x| = 1$. We have, since $0 < \kappa^2 < 1$ and $x \cdot P \leq |P|$, 
\begin{dmath}
\Delta (x \cdot P) - \kappa^2 (x \cdot P - b)^2 = -2b\kappa^2 x \cdot P + \kappa^4 (x \cdot P)^2 + \kappa^2 b^2 - \kappa^4 |P|^2 + \kappa^2 |P|^2 - \kappa^2 (x \cdot P)^2 + 2 \kappa^2 b x \cdot P - \kappa^2 b^2 = k^4 (x \cdot P)^2 - \kappa^4 |P|^2 + \kappa^2 |P|^2 - \kappa^2 (x \cdot P)^2 = -\kappa^4 \left( |P|^2 - (x \cdot P)^2 \right) + \kappa^2 \left( |P|^2 - (x \cdot P)^2 \right) = \kappa^2 (1 - \kappa^2) \left( |P|^2 - (x \cdot P)^2 \right) \geq 0
\end{dmath}
as desired. 
\end{proof}

Now, notice that in this case, in order to make sense of the physical problem, we must have $P$ lying outside of the oval; that is, outside of the region $|X| + \kappa |X-P| \leq b$. In addition to the requirement that the surface be nonempty, we see that the condition on $b$ in order that the ovals are meaningful is that $$\kappa |P| < b < |P|$$

Recalling that
$$\kappa | \rho(x)x - P| = b - \rho(x)$$ we see that, since $ \kappa < 0$, we must have $\rho(x) > b$. We now choose which sign ($\pm$) to take in the definition of $\rho_\pm$ so that $\rho>b$. 
In fact, by Proposition \ref{prop:lowerestimateofDelta},
$$\rho_+ (x)=\frac{(b - \kappa^2 x \cdot P) + \sqrt{\Delta (x \cdot P)}}{1 - \kappa^2}\geq \frac{(b - \kappa^2 x \cdot P) + |\kappa| |b - x \cdot P|}{1 - \kappa^2} \geq b, $$
where the last inequality holds since $|\kappa|<1$.
Therefore the equality $\rho_+(x)=b$ holds only if $| x \cdot P| = |P|$ and $b = x \cdot P$. 
Also 
\begin{align*}
\frac{(b - \kappa^2 x \cdot P) + |\kappa| |b - x \cdot P|}{1 - \kappa^2}
&=
\dfrac{1}{1-\kappa^2}
\begin{cases}
(b - \kappa^2 x \cdot P) + |\kappa| (b - x \cdot P) & \text{for $b > x \cdot P$}\\
(b - \kappa^2 x \cdot P) + |\kappa| (x \cdot P-b) & \text{for $b < x \cdot P$}
\end{cases}\\
&\quad >b.
\end{align*}
Hence $\rho_+ (x) > b$ for all $x$ and $b$ such that $b\neq x\cdot P$.
From \eqref{bgeqkappap} the oval is not degenerate only for $\kappa|P| < b$. 
Reversing the above inequalities one can show that
$$\rho_- (x) \leq b.$$ 
Thus the polar equation of the surface $\Gamma$ is given by
\begin{align}
h(x, P, b) = \rho_+ (x) = \frac{(b - \kappa^2 x \cdot P) + \sqrt{\Delta (x \cdot P)}}{1 - \kappa^2}
\end{align}
provided that $\kappa|P| < b $. Furthermore, from the physical constraint for refraction \eqref{eq:constraint_kappa_lessthan0}, in this case with $m=\dfrac{P-\rho_+(x)x}{|P-\rho_+(x)x|}$, so $x\cdot m\geq \kappa$, and using the equation \eqref{eq:ovalnegativek}, we must have
$$x \cdot P \geq b.$$
Ending with the case when $-1 < \kappa < 0$, given a point $P \in \mathbb{R}^3$ and $\kappa |P| \leq  b $, a \emph{refracting oval} is the set
\begin{align} \label{refractingoval1}
\mathcal{O}(P, b) = \left\{ h(x, P, b)x : x \in S^2 \;, x \cdot P \geq b\right\} 
\end{align}
 where
$$h(x, P, b) = \frac{ (b - \kappa^2 x \cdot P) + \sqrt{(b - \kappa^2 x \cdot P)^2 - (1 - \kappa^2)(b^2 - \kappa^2 |P|^2)}}{1 - \kappa^2}.$$
Figure \ref{fig:ovalwithkappa=-.7}  illustrates an example of such a refracting oval.

\begin{figure}
\begin{center}
\includegraphics[scale=.4,angle=90]{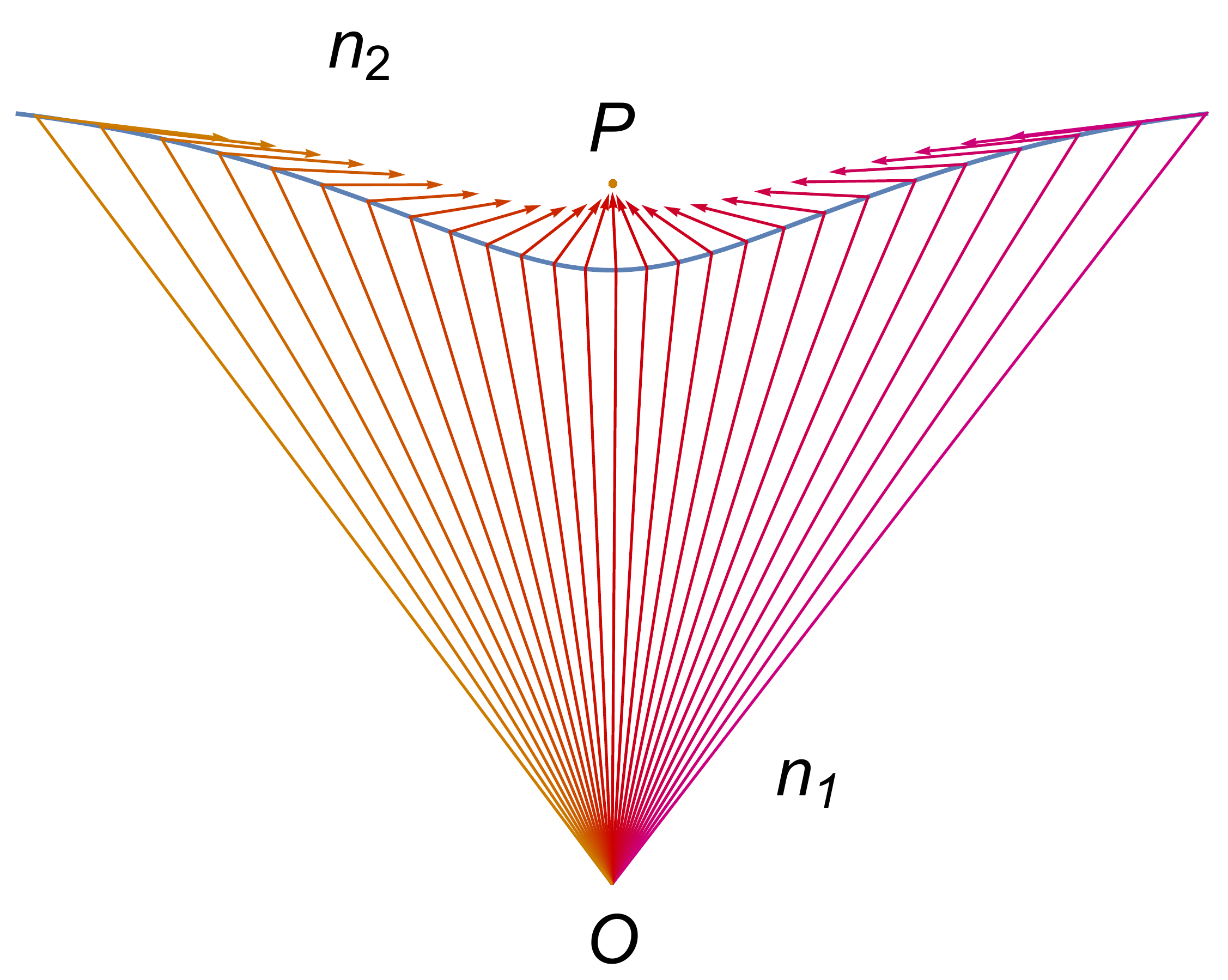}
\caption{$\kappa=-.7$, $P=(0,2.5)$.}
\label{fig:ovalwithkappa=-.7}
\end{center}
\end{figure}

\subsubsection{Case $\kappa < -1$.}


In this case we must have $O$ not on the oval, that is, $O$ must lie outside of the set $|X| + \kappa |X-P| \geq b$. Hence in combination with the requirement that the surface be nonempty we must have $$\kappa |P| < b < |P|.$$ Notice this is the same condition on $b$ as in the case $-1 < \kappa < 0$.

Multiplying \eqref{eq:quadraticeqforrho} by $-1$ and solving for $\rho_{\pm}$ yields
\[
\rho_{\pm}(x)=\dfrac{\kappa^2 \,x\cdot P -b \pm \sqrt{(b-\kappa^2\,x\cdot P)^2-(\kappa^2-1)(\kappa^2 |P|^2-b^2)}}{\kappa^2-1}.
\]
To have the square root defined we need 
\[
\Delta(x\cdot P)=(b-\kappa^2\,x\cdot P)^2-(\kappa^2-1)(\kappa^2 |P|^2-b^2)\geq 0.
\]
From \eqref{upperboundforb}, to have a nonempty oval we must have $b\leq |P|$.
To find the directions $x$ for which $\rho$ is well defined, we need to find the values of $t$ for which
\[
\Delta(t)=(b-\kappa^2\,t)^2-(\kappa^2-1)(\kappa^2 |P|^2-b^2)\geq 0.
\]
Let $z=b-\kappa^2\,t$, and  then we want $t$ such that
\[
z^2\geq (\kappa^2-1)(\kappa^2 |P|^2-b^2).
\]
If $\kappa^2 |P|^2-b^2\leq 0$, then this is true for any $t$.
On the other hand, $\kappa^2 |P|^2-b^2\geq 0$ if and only if $|b|\leq -\kappa |P|$.
So for $b$ with $|b|\leq -\kappa |P|$ we have to choose $t$ such that $z=b-\kappa^2\,t$ satisfies\footnote{Notice the requirement $|b| \leq -\kappa |P|$ is satisfied since we are assuming that $b \leq |P|$ so in particular $|b| \leq |P| \leq -\kappa |P|$.}
\[
|z|\geq \sqrt{(\kappa^2-1)(\kappa^2 |P|^2-b^2)}.
\] 
If $b-\kappa^2\,t\geq 0$, i.e., $t\leq b/\kappa^2$, then 
$
b-\kappa^2\,t\geq \sqrt{(\kappa^2-1)(\kappa^2 |P|^2-b^2)},
$
that is,
\[
t\leq \dfrac{b-\sqrt{(\kappa^2-1)(\kappa^2 |P|^2-b^2)}}{\kappa^2}.
\]
If $b-\kappa^2\,t\leq 0$, i.e., $t\geq b/\kappa^2$, then 
$
\kappa^2\,t-b\geq \sqrt{(\kappa^2-1)(\kappa^2 |P|^2-b^2)},
$
that is,
\begin{equation}\label{eq:tbiggerthatnb+sqrtoverkappasquared}
t\geq \dfrac{b+\sqrt{(\kappa^2-1)(\kappa^2 |P|^2-b^2)}}{\kappa^2}.
\end{equation}
From \eqref{eq:constraint_kappa_lessthan-1}, to have refraction in this case, we need to have
$$x \cdot \frac{P - x \rho_{\pm}(x)}{|P - x \rho_{\pm}(x)|} \geq \frac{1}{\kappa}.$$
Note that this is equivalent to requiring that
$$\kappa^2 x \cdot P - b \geq (\kappa^2 -1) \rho_{\pm}(x) $$ 
That is, we need
$$\rho_{\pm}(x) \leq \frac{\kappa^2 x \cdot P - b}{\kappa^2 -1}.$$
Note that since $\kappa^2 > 1$, we see that $\rho_{\pm}(x) < 0$ for $\kappa^2 x \cdot P - b < 0$. Thus we must have $\kappa^2 x \cdot P \geq b$ if we are to have $\rho_{\pm}(x) \geq 0$. From \eqref{eq:tbiggerthatnb+sqrtoverkappasquared} we need that
$$x \cdot P \geq \frac{b + \sqrt{ (\kappa^2 - 1)(\kappa^2 |P|^2 - b^2)}}{\kappa^2},$$
so that $\Delta(x\cdot P)\geq 0$.
We are left with choosing either $\rho_+ (x)$ or $\rho_- (x)$ now. 
But by the physical restraint for refraction, we see only $\rho_-(x)$ will do the job since
$$\rho_- (x) = \frac{ \kappa^2 x \cdot P - b - \sqrt{\Delta (x \cdot P)}}{\kappa^2 -1} \leq \frac{\kappa^2 x \cdot P - b}{\kappa^2 -1} $$ provided that
$$x \cdot P \geq \frac{ b + \sqrt{ (\kappa^2 -1)(\kappa^2 |P|^2 - b^2)}}{\kappa^2}.$$  
Hence for $\kappa < -1$, refraction only occurs when $b \leq |P|$ and the refracting piece of the oval is given by
\begin{align} \label{refractingoval2}
\mathcal{O}(P, b) = \left\{ h(x, P, b)x: x \cdot P \geq \frac{ b + \sqrt{ (\kappa^2 -1)(\kappa^2 |P|^2 - b^2)}}{\kappa^2} \right\}
\end{align}
with
$$h(x, P, b) = \rho_- (x) = \frac{(\kappa^2 x \cdot P - b) - \sqrt{(\kappa^2 x \cdot P - b)^2 - (\kappa^2 -1)(\kappa^2 |P|^2 - b^2)}}{\kappa^2 -1}.$$
A picture of the refracting piece of this oval can be obtained from Figure \ref{fig:ovalwithkappa=-.7} by reversing the roles of $P$ and $O$ 
and changing the direction of the rays; see also the explanation at the end of the proof of Theorem \ref{klessthan0}.

\subsubsection{Case $\kappa=-1$}
In this case we have $|\rho(x)x-P|=\rho(x)-b$, so squaring and solving for $\rho$ yields
\[
\rho(x)=\dfrac{b^2-|P|^2}{2\,\left(b-x\cdot P \right)}.
\]
To have a nonempty surface, $b$ must satisfy \eqref{eq:boundsforbwhenkappaequals-1}. 
Since $\rho\geq 0$ and the numerator in $\rho$ is nonpositive, we must have $b<x\cdot P$.
The last inequality follows from the physical condition for refraction \eqref{eq:constraint_kappa_lessthan0} with $\kappa=-1$ and $m=\dfrac{P-\rho(x)x}{|P-\rho(x)x|}$. 
Notice that $\rho(x)\geq b$, that is, $\dfrac{b^2-|P|^2}{2\,\left(b-x\cdot P \right)}\geq b$, since this is equivalent to $(b-x\cdot P)^2+|P|^2-(x\cdot P)^2\geq 0$.

Therefore when $\kappa=-1$ the surface refracting the origin $O$ into $P$ is given by
\begin{equation}\label{eq:refractingsurfacekappa-1}
\mathscr E(P,b)=\left\{\dfrac{b^2-|P|^2}{2\,\left(b-x\cdot P \right)}\,x:x\in S^2,\, x\cdot P>b \right\}
\end{equation}
with $|b|\leq |P|$. 

The uniform refraction property is summarized below. 

\begin{theorem} Let $n_1, n_2$ be two indices of refraction of two media $I$ and $II$, respectively, such that $\kappa := n_2/ n_1 < 0$. Assume that $O$ is a point inside medium $I$, $P$ is a point inside medium $II$, and  $b \in (\kappa |P|, |P|)$. Then,

\begin{enumerate}
\item If $-1 < \kappa < 0$ and $\Gamma := \mathscr{O}(P, b)$ is given by (\ref{refractingoval1}), 
then $\Gamma$ refracts all rays emitted from $O$ into $P$. 
\item If $\kappa < -1$ and $\Gamma: = \mathscr{O}(P, b)$ is given by (\ref{refractingoval2}), 
then $\Gamma$ refracts all rays emitted from $O$ into $P$.
\item If $\kappa = -1$ and $\Gamma: = \mathscr{E}(P, b)$ is given by (\ref{eq:refractingsurfacekappa-1}), 
then $\Gamma$ refracts all rays emitted from $O$ into $P$. 
\end{enumerate}

\end{theorem}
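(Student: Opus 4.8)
The plan is to establish the converse of the derivation already carried out in this section: having shown from the vector Snell law \eqref{eq:snelllawnegative} that any surface with the uniform refraction property must satisfy $|X|+\kappa\,|X-P|=b$, I will verify that the sets $\mathscr O(P,b)$ and $\mathscr E(P,b)$ defined by \eqref{refractingoval1}, \eqref{refractingoval2} and \eqref{eq:refractingsurfacekappa-1} indeed refract every ray emitted from $O$ into $P$ with no total internal reflection. The central observation is that if $F(X):=|X|+\kappa\,|X-P|$, then for $X\neq 0$, $X\neq P$,
\[
\nabla F(X)=\frac{X}{|X|}-\kappa\,\frac{P-X}{|P-X|}=x-\kappa\,m ,
\]
where $x=X/|X|$ is the unit incident direction at $X$ and $m=(P-X)/|P-X|$ is the prospective unit refracted direction. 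Hence $x-\kappa\,m$ is normal to the level surface $\{F=b\}$ at $X$, which is exactly the content of \eqref{eq:snelllawnegative}.

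I would carry this out in the following steps. \emph{Step 1.} Fix $b\in(\kappa|P|,|P|)$; by the nonemptiness discussion preceding \eqref{eq:quadraticeqforrho} the corresponding surface is nonempty, and since $F(O)=\kappa|P|<b<|P|=F(P)$ the origin lies in $\{F<b\}$ and $P$ lies in $\{F>b\}$, so media $I$ and $II$ occupy opposite sides of $\Gamma$. \emph{Step 2.} Put $\nu:=\nabla F(X)/|\nabla F(X)|$ and $\lambda:=|\nabla F(X)|>0$, so that $x-\kappa\,m=\lambda\,\nu$. Taking dot products of this identity with $x$, with $m$, and with itself shows that $\lambda$ satisfies $\lambda^2-2\lambda\,(x\cdot\nu)+(1-\kappa^2)=0$ with $\lambda>0$, hence $\lambda$ coincides with the value in \eqref{eq:formulaforlambdakappanedagtive}; moreover $\nu\cdot m=(x\cdot m-\kappa)/\lambda\ge 0$ on the relevant piece (so $\nu$ points toward medium $II$) and $x\cdot\nu=(1-\kappa\,x\cdot m)/\lambda\ge 0$, so the incidence hypothesis is met. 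Thus Snell's law holds at $X$ with the correct orientation. \emph{Step 3.} Impose the no--total--internal--reflection constraint: for $-1<\kappa<0$ this is $x\cdot m\ge\kappa$, i.e.\ \eqref{eq:constraint_kappa_lessthan0}, which by \eqref{eq:ovalnegativek} is equivalent to $x\cdot P\ge b$, the angular restriction in \eqref{refractingoval1}; for $\kappa<-1$ it is $x\cdot m\ge 1/\kappa$, i.e.\ \eqref{eq:constraint_kappa_lessthan-1}, which (the same computation used to select $\rho_-$) is equivalent to $x\cdot P\ge (b+\sqrt{(\kappa^2-1)(\kappa^2|P|^2-b^2)})/\kappa^2$, the restriction in \eqref{refractingoval2}. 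These two checks amount to a re-reading of the polar--equation computations already performed. \emph{Step 4.} For $\kappa=-1$ the surface $\{F=b\}$ with $|b|<|P|$ is one sheet of a hyperboloid of revolution with foci $O$ and $P$; here $\nabla F=x+m$, and $x+m=\lambda\nu$ with $\lambda=|x+m|=2\,(x\cdot\nu)$ reproduces the reflection formula $m=2\,(x\cdot\nu)\,\nu-x$ recorded after \eqref{eq:formulaforlambdakappanedagtive}, while \eqref{eq:constraint_kappa_lessthan0} with $\kappa=-1$ gives $x\cdot P>b$, the restriction in \eqref{eq:refractingsurfacekappa-1}.

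The main obstacle is not the gradient identity, which is immediate, but the bookkeeping that identifies \emph{which} portion of the full level set $\{F=b\}$ actually does the refracting. Since $F$ is not convex when $\kappa<0$, its level set can be disconnected or non-convex, and passing from $|X|+\kappa|X-P|=b$ to the polar representation requires squaring $\kappa\,|\rho(x)x-P|=b-\rho(x)$, which is reversible only on the sign--correct branch ($\rho\ge b$, since $\kappa<0$). One must confirm that the branch $\rho_+$ (for $-1<\kappa<0$ and $\kappa=-1$), respectively $\rho_-$ (for $\kappa<-1$), chosen in the polar--equation section is precisely the one on which both this sign condition and the refraction constraint hold, and that on that branch $\nu=(x-\kappa\,m)/|x-\kappa\,m|$ is the outward normal pointing into medium $II$. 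All of these facts were settled in the arguments leading to \eqref{refractingoval1}, \eqref{refractingoval2} and \eqref{eq:refractingsurfacekappa-1}, so the proof consists in assembling them together with the gradient identity above.
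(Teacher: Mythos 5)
Your proposal is correct and takes essentially the same route as the paper: the theorem there is simply a summary of the preceding derivation (the tangent-vector Snell computation giving $|X|+\kappa\,|X-P|=b$, the choice of branch $\rho_{\pm}$, and the reduction of the no-total-internal-reflection constraints to $x\cdot P\geq b$, respectively $x\cdot P\geq \bigl(b+\sqrt{(\kappa^2-1)(\kappa^2|P|^2-b^2)}\bigr)/\kappa^2$), and your gradient identity $\nabla F(X)=x-\kappa\,m$ is just the coordinate-free form of the paper's computation $X'(t)\cdot\bigl(x-\kappa\,m\bigr)=0$. Your explicit verification of the normal's orientation ($x\cdot\nu\geq 0$, $m\cdot\nu\geq 0$, $\nu$ pointing toward medium $II$) makes the sufficiency direction a bit more explicit than the paper's presentation, but it is the same argument assembled from the same ingredients.
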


\subsection{Analysis of the convexity of the surfaces}\label{subsect:convexityofsurfaces}
We assume $\kappa<0$ and analyze the convexity of the 
surface
\[
|X|+\kappa |X-P|=b.
\]
We will show the surface is convex for a range of $b$'s and neither convex nor concave for the remaining range of $b$'s. \footnote{To be precise, by convexity we mean that the set that is enclosed by $F(X) = |X| + \kappa |X-P| = b$ is a convex set.} We begin by letting $X=(x_1,\cdots,x_n)$ and $P=(p_1,\cdots, p_n)$ and analyzing the function 
\[
F(X)=|X|+\kappa |X-P|.
\]
We have the gradient
\[
D F(X)=\dfrac{X}{|X|}+\kappa \,\dfrac{X-P}{|X-P|},
\]
and
\[
F_{x_ix_j}=\delta_{ij} \dfrac{1}{|X|} -\dfrac{x_ix_j}{|X|^3}+\kappa\,\delta_{ij} \dfrac{1}{|X-P|}
-
\kappa\, \dfrac{(x_i-p_i)(x_j-p_j)}{|X-P|^3},
\]
where $\delta_{ij}$ is the Kronecker delta. 
Then
\begin{align*}
&Q(X,\xi)=\sum_{i,j=1}^n F_{x_ix_j}(X)\xi_i\xi_j\\
&=
\dfrac{|\xi|^2}{|X|}
-
\dfrac{1}{|X|^3}\sum_{i,j=1}^n x_i x_j \xi_i\xi_j
+\kappa\,\dfrac{|\xi|^2}{|X-P|}
-
\kappa\,
\dfrac{1}{|X-P|^3}\sum_{i,j=1}^n(x_i-p_i)(x_j-p_j)\xi_i\xi_j\\
&=
\dfrac{|\xi|^2}{|X|}
-
\dfrac{1}{|X|^3}\,(X\cdot \xi)^2
+\kappa\,\dfrac{|\xi|^2}{|X-P|}
-
\kappa\,
\dfrac{1}{|X-P|^3}((X-P)\cdot \xi)^2\\
&=
\kappa\,
\dfrac{1}{|X-P|^3}\left[|\xi|^2\,|X-P|^2 -((X-P)\cdot \xi)^2\right]
+
\dfrac{1}{|X|^3}\,\left[|\xi|^2\,|X|^2- (X\cdot \xi)^2\right].
\end{align*}
By Cauchy-Schwartz's inequality $|v\cdot w|\leq |v|\,|w|$ with equality if and only if $v$ is a multiple of $w$.
If we set $\xi=X$ in the quadratic form we get $Q(X,X)=\kappa\,
\dfrac{1}{|X-P|^3}\left[|X|^2\,|X-P|^2 -((X-P)\cdot X)^2\right]$. Since $P\neq 0$, $X$ is not a multiple of $X-P$ if and only if $X$ is not a multiple of $P$ and in this case we get $Q(X,X)<0$ since $\kappa < 0$. On the other hand, if we set $\xi=X-P$, then
$Q(X,X-P)=\dfrac{1}{|X|^3}\,\left[|X-P|^2\,|X|^2- (X\cdot X-P)^2\right]$, and again if $X$ is not a multiple of $P$ we obtain $Q(X,X-P)>0$. Therefore the quadratic form $Q(X,\xi)$ is indefinite and therefore the function $F$ is neither concave nor convex at each $X$ which is not a multiple of $P$.

\begin{figure}[htp]
\begin{center}
    \subfigure[Ovals with $\kappa = -.5$, $P=(2,0)$, and different values of $-1\leq b\leq 4$.]{\label{fig:edge-a}
    \includegraphics[width=4in]{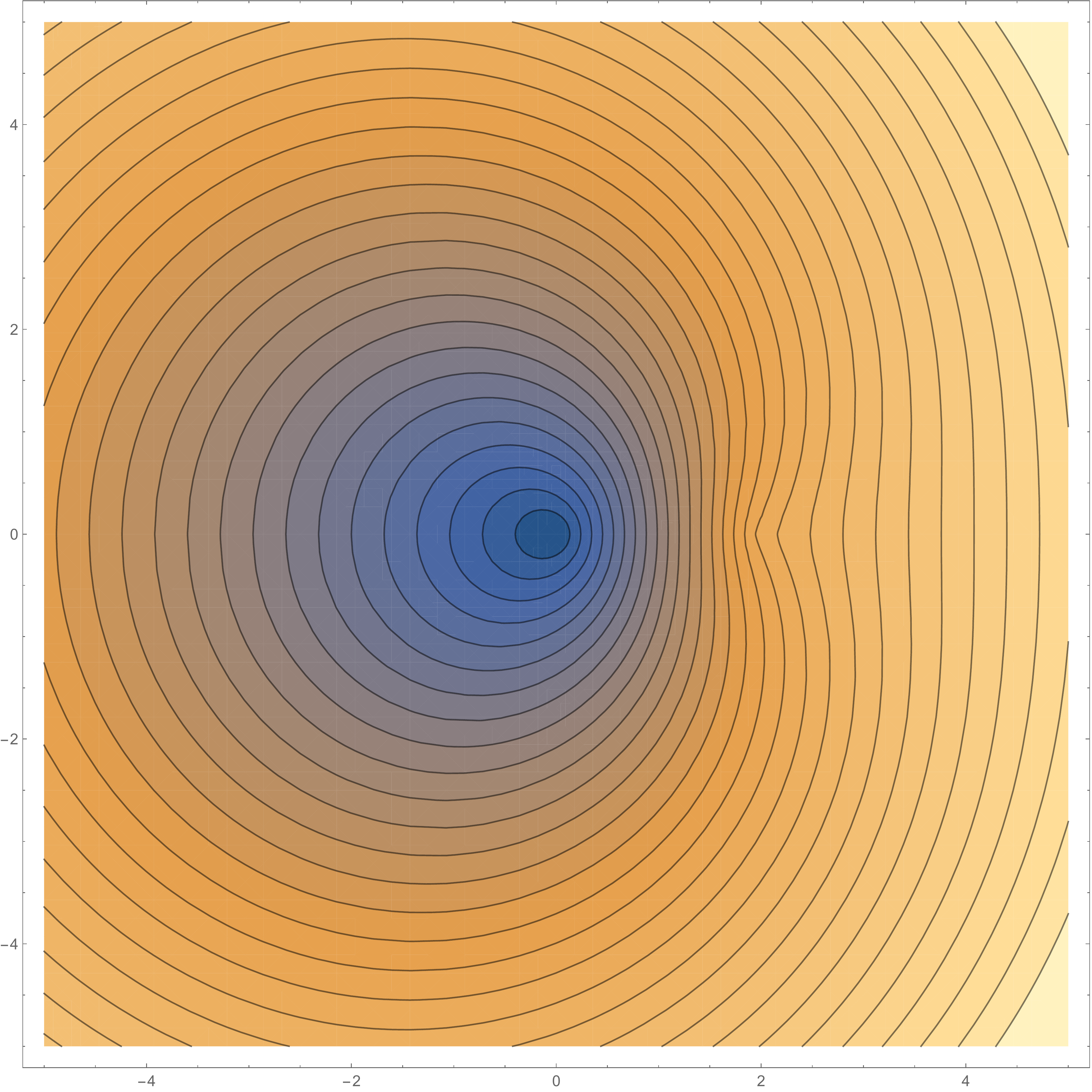}}
    \subfigure[scale]{\label{fig:edge-b}
    \includegraphics[width=.3in]{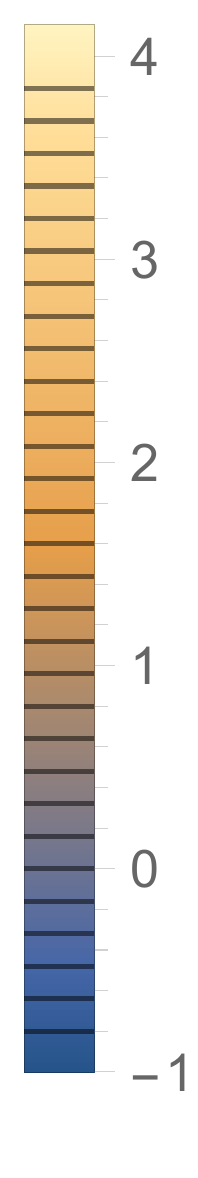}} 
\end{center}
  \caption{Ovals $\kappa=-.5$}
  \label{fig:ovalskappa=-.5}
\end{figure}

Let now $X$ have the form $X=\lambda \,P$, $0<\lambda<\infty$, 
and see for what values of $\lambda$ there holds that $Q(\lambda P,\xi)\geq 0$ for all $\xi$. In other words, we study which values of $\lambda$ make the Hessian $F_{x_ix_j}(\lambda P)\geq 0$. In particular {\it we wish to find the range of $b$ that give a convex surface. The range we obtain will be proven rigorously in Theorem \ref{klessthan0}.} 
First off we have
\begin{align*}
&Q(\lambda P,\xi)\\
&= 
\kappa\,\dfrac{1}{|1-\lambda|^3|P|^3}
\left[
(1-\lambda)^2 |\xi|^2|P|^2 -(1-\lambda)^2(P\cdot \xi)^2
\right]
+
\dfrac{1}{\lambda^3|P|^3 }\left[\lambda^2 |\xi|^2 |P|^2 - \lambda^2 (P\cdot \xi)^2 \right]\\
&=
\left(\dfrac{1}{\lambda}+\kappa\,\dfrac{1}{|1-\lambda|}\right)\,\dfrac{1}{|P|^3}\,\left[|\xi|^2 |P|^2 -  (P\cdot \xi)^2 \right]\geq 0
\end{align*}
if and only if
\[
\phi(\lambda):=\dfrac{1}{\lambda}+\kappa\,\dfrac{1}{|1-\lambda|}\geq 0.
\]
Let us assume $-1<\kappa<0$.
Then $\phi(\lambda)> 0$ on $(0,1/(1-\kappa))$, $\phi(\lambda)< 0$ on $(1/(1-\kappa),1)$,
$\phi(\lambda)< 0$ on $(1,1/(1+\kappa))$, and $\phi(\lambda)> 0$ on $(1/(1+\kappa),+\infty)$.
Therefore
\[
Q(\lambda P,\xi)
\begin{cases}
> 0 & \text{for $\lambda\in (0,1/(1-\kappa))\cup (1/(1+\kappa),+\infty)$}\\
< 0 & \text{for $\lambda\in (1/(1-\kappa),1)\cup (1,1/(1+\kappa))$} .
\end{cases}
\]
Suppose $b\geq \kappa |P|$, see (\ref{bgeqkappap}).
If $\lambda\in (0,1/(1-\kappa))$, then since $\kappa<0$, we have $\lambda <1$.
If the point $X=\lambda P$ is on the surface \eqref{eq:ovalnegativek}, then 
$\lambda |P|+\kappa \,(1-\lambda)|P|=b$ and so $\lambda=\dfrac{b-\kappa\,|P|}{(1-\kappa)|P|}$.
Therefore when $0<\lambda<1/(1-\kappa)$ to have $F_{x_ix_j}(\lambda P)\geq 0$, we must have
\[
\dfrac{b-\kappa\,|P|}{(1-\kappa)|P|}\leq \dfrac{1}{1-\kappa}.
\]
That is,
\[
b\leq (1+\kappa)|P|.
\]
So $F_{x_ix_j}(\lambda P)\geq 0$, $0<\lambda<1/(1-\kappa)$, with $\lambda P$ on the surface $\lambda |P|+\kappa \,(1-\lambda)|P|=b$ if and only if $b\in [\kappa |P|,(1+\kappa)|P|]$.

Suppose now $\lambda \in (1/(1- \kappa), 1)$. Let us see for which $b$ we have $F_{x_i x_j}(\lambda P) \leq 0$. Since $\lambda= \dfrac{b-\kappa\,|P|}{(1-\kappa)|P|}$ we see that in order to have $F_{x_i x_j }(\lambda P) \leq 0$, we need

$$\frac{1}{1 - \kappa} < \frac{b - \kappa |P|}{(1 - \kappa)|P|} < 1$$ so that
$$(1 + \kappa)|P| < b < |P|$$
That is, $F_{x_ix_j}(\lambda P)\leq 0$ precisely when $b\in [(1+\kappa)|P|,|P|]$. 

On the other hand, if $\lambda\in (1/(1+\kappa),+\infty)$, then $\lambda>1$ since $-1<\kappa<0$.
If $\lambda P$ is on the surface, which means $\lambda |P|+\kappa \,(\lambda-1)|P|=b$, we have
\[
\lambda=\dfrac{b+\kappa\,|P|}{(1+\kappa)\,|P|}.
\]
So $\lambda\geq \dfrac{1}{1+\kappa}$ implies
\[b\geq (1-\kappa)|P|.
\] 
But $(1-\kappa)|P| > |P|$ and by the physical restriction on $b$, we must have $\kappa |P| <b < |P|$. So we see that 
$$|P| < (1-\kappa)|P| \leq b < |P|$$ which is impossible.

Let now $\lambda \in (1, 1/(1 + \kappa))$. Then since $\lambda =\dfrac{b+\kappa\,|P|}{(1+\kappa)\,|P|}$ we see that
$$ 1 < \dfrac{b+\kappa\,|P|}{(1+\kappa)\,|P|} < \frac{1}{1 + \kappa}$$
so that

$$|P| < b < (1 - \kappa)|P|$$
But by the constraint on $b$ we must have $b < |P|$, which contradicts the above inequality.

Hence given the critical values of $b$ as discussed above, the main result in this section is as follows: 

\begin{theorem} \label{klessthan0}
Suppose $-1<\kappa <0$. 
The surface 
$|X| +\kappa\, |X-P| = b$
is
\begin{enumerate}
\item convex, when $b\in [\kappa |P|,(1+\kappa)|P|]$;
\item neither convex nor concave, when $b\in ((1+\kappa)|P|,|P|)$.
\end{enumerate}

If $\kappa < -1$ then the surface 
$|X| +\kappa\, |X-P| = b$
is
\begin{enumerate}
\item convex when $b \in ((1+ \kappa)|P|, |P|)$;
\item neither concave nor convex when $b \in (\kappa |P|, (1+\kappa)|P|]$.
\end{enumerate}
\end{theorem}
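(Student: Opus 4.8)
The plan is to read off convexity of the region $\{F\le b\}$, $F(X)=|X|+\kappa|X-P|$, from the sign of the restricted Hessian $Q(X,\xi)$ already computed above, and then to analyze that sign on the surface $S_b:=\{F=b\}$. Since $\kappa\neq-1$, the gradient $DF(X)=\dfrac{X}{|X|}+\kappa\,\dfrac{X-P}{|X-P|}$ never vanishes on $S_b$ (a zero would force $|\kappa|=1$), so $S_b$ is a smooth hypersurface, $DF$ is the outward normal of the bounded set $\{F\le b\}$, and $\{F\le b\}$ is convex if and only if $Q(X,\xi)\ge0$ whenever $X\in S_b$ and $\xi\cdot DF(X)=0$ (the sign being normalized by the model $F=|X|$, for which $Q(X,\xi)=|\xi|^2/|X|>0$ on tangent vectors and $\{F\le b\}$ is a ball). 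Moreover $F$ is rotationally invariant about the line $\ell$ through $O$ and $P$, so $S_b$ is a surface of revolution and bounds a convex body if and only if its meridian curve $C_b$ bounds a convex planar region; it therefore suffices to analyze $C_b$.

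For the necessity of the stated $b$-range and the ``neither convex nor concave'' assertion, I would examine $C_b$ at the points where it meets $\ell$, namely $X=\lambda P$ with $\lambda|P|+\kappa|\lambda-1||P|=b$. By the reflection symmetry of $C_b$ across $\ell$ the curve is orthogonal to $\ell$ there, so the unit tangent $T$ is orthogonal to $P$, hence to both $X/|X|$ and $(X-P)/|X-P|$ (each $\pm\widehat P$), and the Hessian collapses to
\[
Q(X,T)=\frac{1}{|X|}+\frac{\kappa}{|X-P|}=\frac{1}{|P|}\left(\frac{1}{|\lambda|}+\frac{\kappa}{|\lambda-1|}\right).
\]
For $-1<\kappa<0$ one such crossing has $0\le\lambda<1$, with $\lambda=\dfrac{b-\kappa|P|}{(1-\kappa)|P|}$ and $Q(X,T)=\phi(\lambda)/|P|$ (the function $\phi$ analyzed above); the other crossing lies on the opposite ray ($\lambda\le0$), where $Q(X,T)=\dfrac{1+(1-|\kappa|)|\lambda|}{|P|\,|\lambda|(1-\lambda)}>0$ automatically. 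Hence convexity forces $\phi(\lambda)\ge0$ at the crossing with $0\le\lambda<1$, i.e. $\lambda\le1/(1-\kappa)$, i.e. $b\le(1+\kappa)|P|$; together with the nonemptiness bound $b\ge\kappa|P|$ from \eqref{bgeqkappap} this gives the interval $[\kappa|P|,(1+\kappa)|P|]$. When $(1+\kappa)|P|<b<|P|$ this $\phi(\lambda)$ is strictly negative while the opposite crossing still gives $Q>0$, so $Q(\cdot,T)$ changes sign along $C_b$ and the surface is neither convex nor concave.

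The remaining, and main, task is sufficiency: for $b\in[\kappa|P|,(1+\kappa)|P|]$ one must show $Q(X,T)\ge0$ at \emph{every} point of $C_b$, not only at the two axis crossings. I expect this to be the main obstacle, since $F$ itself is neither convex nor concave at non-axis points. The direct route is to set $O=0$, $P=(|P|,0)$, write the upper branch of $C_b$ as a graph $y=g(x)$ solving $\sqrt{x^2+y^2}+\kappa\sqrt{(x-|P|)^2+y^2}=b$, compute $g''$ by implicit differentiation, and prove $g''\le0$ on its whole interval — equivalently, that the continuous function $X\mapsto Q(X,T)$ on $C_b$ has no sign change between its two axis crossings for $b$ in the good range. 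A cleaner alternative is to realize $\{F\le b\}$ as the intersection of the supporting half-spaces cut out, at each $X\in S_b$, by the far-field semi-ellipsoid $E\bigl(\widehat{P-X},c(X)\bigr)$ of Theorem \ref{thm:uniformrefractionfarfield} — a convex cap for $-1<\kappa<0$ which shares the normal direction $DF(X)$ with $S_b$ at $X$ — since the restriction $b\le(1+\kappa)|P|$ is precisely what makes these caps support $\{F\le b\}$ from outside, whence convexity is automatic.

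Finally, the case $\kappa<-1$ reduces to the case just treated: here the bounded region enclosed by the surface is $\{F\ge b\}$, and dividing $|X|+\kappa|X-P|=b$ through by $\kappa<0$ yields
\[
|X-P|+\kappa^{-1}|X|=b/\kappa ,
\]
an equation of the same form — with the roles of $O$ and $P$ exchanged, the same focal distance $|P|$, ratio $\kappa':=\kappa^{-1}\in(-1,0)$ and parameter $b':=b/\kappa$ — under which $\{F\ge b\}$ becomes the sublevel set $\{|X-P|+\kappa'|X|\le b'\}$. By the first part this is convex iff $b'\in[\kappa'|P|,(1+\kappa')|P|]$ and neither convex nor concave iff $b'\in\bigl((1+\kappa')|P|,|P|\bigr)$; multiplying these inequalities by $\kappa<0$ (which reverses them) gives $b\in[(1+\kappa)|P|,|P|]$ and $b\in(\kappa|P|,(1+\kappa)|P|]$ respectively, which are the stated conclusions — the endpoint values $b=(1+\kappa)|P|$ and $b=|P|$ being covered by direct borderline/degeneracy checks.
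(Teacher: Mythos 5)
Your overall framework is sound, and two of your three pieces are correct: the axis-crossing computation (tangent $T\perp P$ at $X=\lambda P$, $Q(X,T)=\frac1{|X|}+\frac{\kappa}{|X-P|}$) does prove that convexity forces $b\le(1+\kappa)|P|$ and, combined with the positive sign at the crossing on the opposite ray, that the oval is neither convex nor concave for $(1+\kappa)|P|<b<|P|$ — this is in fact a cleaner route to part (2) than the paper's; and your reduction of $\kappa<-1$ to $-1<\kappa<0$ by dividing the equation by $\kappa$ and exchanging the roles of $O$ and $P$ is essentially the paper's own final step.

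The genuine gap is the convexity assertion for $b\in[\kappa|P|,(1+\kappa)|P|]$, which you explicitly leave undone — and it is the substance of the paper's proof. Nonnegativity of $Q(\cdot,T)$ at the two axis crossings says nothing about the rest of the meridian curve, and since $F$ is indefinite off the axis a genuinely global argument is required. Your route (a) merely restates what must be shown; the paper executes it by substituting $|X-P|=\dfrac{b-|X|}{\kappa}$ into the numerator of the tangential curvature $Q(X,(-F_y,F_x))$, reducing it to an explicit quartic $G(t,b,\kappa)$ in $t=|X|$, showing that on the oval $t$ ranges over $\left[\dfrac{b-\kappa|P|}{1-\kappa},\dfrac{b-\kappa|P|}{1+\kappa}\right]$, and then using the factored derivative $\partial_tG=-3b\dfrac{(\kappa^2-1)^2}{\kappa^3}\,t(t-b)(t-t_b)$ to locate the minimum of $G$ on that interval, namely $G\!\left(\dfrac{b-\kappa|P|}{1-\kappa},b,\kappa\right)=\dfrac{(b-|P|)^2(b-\kappa|P|)^2\left((1+\kappa)|P|-b\right)}{(1-\kappa)^2}$, which is nonnegative exactly when $b\le(1+\kappa)|P|$; this case analysis (split according to the sign of $b$ and the position of $t_b$) is what certifies $Q\ge0$ everywhere and is absent from your proposal. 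Your route (b) is not sound as stated: having at each boundary point a convex cap (semi-ellipsoid) that contains $\{F\le b\}$ and is tangent there does not imply convexity — non-convex bodies also admit tangent convex supersets at every boundary point (e.g. large tangent balls at a dent), since what is needed is a supporting \emph{half-space}, not a supporting convex set — and the key claim that $b\le(1+\kappa)|P|$ is what makes these caps enclose the oval is itself asserted without proof. So the convexity half of the theorem still requires the quantitative one-variable minimization (or an equivalent global curvature argument) to be supplied.
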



\begin{proof}
We begin by proving Theorem \ref{klessthan0} in dimension two when $-1 < \kappa < 0$. 
We will determine the curvature using \cite[Proposition 3.1]{Goldman}. That is, we let $\xi = (-F_y, F_x)$ into the quadratic form $Q(X, \xi)$ and determine its sign for each $X$ on the curve when $b$ varies. 
We have that 
\begin{align*}
&Q(X, (-F_y, F_x))  = \frac{1}{|X-P|^3 |X|^3} \times \\ 
&\left[ \kappa^3 |X|^3 |X-P|^2 + |X|^2 |X-P|^3 \right.
\\& \left.+ \kappa^2 |X-P|\left( X \cdot P -3 |X|^2 \right) \left( X \cdot P - |X|^2 \right) 
+  \kappa |X|  \left( |X|^2 - X \cdot P \right) \left( 2 |P|^2 -5 X \cdot P + 3 |X|^2 \right)  \right]\\
& := \frac{1}{|X-P|^3 |X|^3} \times 
\left[ \kappa^3 |X|^3 |X-P|^2 + |X|^2 |X-P|^3 
+ B + 
 C  \right].
\end{align*}
Since $2( X \cdot P ) = |X|^2 + |P|^2 - |X-P|^2$, we can write $B,C$ as functions of $|X|$ and $|X-P|$ as follows:
\begin{align*}
B&=
\kappa^2 |X-P|\left(  3 |X|^2 -X \cdot P\right) \left(|X|^2- X \cdot P  \right)\\
&=
\kappa^2 |X-P| \left(|X|^2- X \cdot P  \right)^2
+2\kappa^2 |X-P|\, |X|^2\, \left(|X|^2- X \cdot P  \right)\\
&=
\dfrac14 \,\kappa^2 |X-P| \left(|X|^2+|X-P|^2-|P|^2  \right)^2
+\kappa^2 |X-P|\, |X|^2\, \left(|X|^2+|X-P|^2-|P|^2  \right);
\end{align*}
and
\begin{align*}
C&=
\kappa |X|  \left( |X|^2 - X \cdot P \right) \left( 2 |X-P|^2+|X|^2-X\cdot P \right)\\
&=
2\,\kappa |X| \,|X-P|^2 \left( |X|^2 - X \cdot P \right) 
+
\kappa |X| \left( |X|^2 - X \cdot P \right)^2\\
&=
\kappa |X| \,|X-P|^2 \left( |X|^2+|X-P|^2-|P|^2 \right) 
+
\dfrac14\,\kappa |X| \left( |X|^2+|X-P|^2-|P|^2 \right)^2.
\end{align*}
The numerator of $Q(X, (-F_y, F_x))$ then equals
\begin{align*}
&
|X|^2 |X-P|^2 \left(\kappa^3 |X|+|X-P| \right)
+
\dfrac14 \kappa \left(|X-P|^2+|X|^2-|P|^2\right)^2 \left(|X|+\kappa |X-P|\right)\\
&\qquad 
+ \kappa |X| |X-P| \left(|X-P|^2+|X|^2-|P|^2\right)\left(\kappa |X|+|X-P| \right):=H(|X|,|X-P|,\kappa,|P|),
\end{align*} 
and we want to optimize this quantity when $X$ is on the curve $F(X)=|X|+\kappa |X-P|=b$.
If $X$ is on the curve there holds $|X-P| = \dfrac{b-|X|}{\kappa}\geq 0$.
Hence $|X|\geq b$, and also $\left||P|-|X|\right|\leq |X-P|=\dfrac{b-|X|}{\kappa}$.
Hence
\[
-\dfrac{b-|X|}{\kappa}\leq |P|-|X|\leq \dfrac{b-|X|}{\kappa}.
\]
Multiplying by $\kappa<0$ yields
\[
-b+|X|\geq \kappa |P|-\kappa |X|\geq b-|X|,
\]
so
\[
(1-\kappa) |X|\geq b-\kappa |P|,
\]
and
\[
(1+\kappa) |X|\geq b+\kappa |P|.
\]
Then the last two inequalities together with $|X|\geq b$ imply that $X$ must satisfy
\[
|X|\geq \max \left\{\dfrac{b-\kappa |P|}{1-\kappa}, \dfrac{b+\kappa |P|}{1+\kappa},b\right\}.
\]
In addition, from $|X-P| = \dfrac{b-|X|}{\kappa}$ we get that $|X|+|P|\geq |X-P| = \dfrac{b-|X|}{\kappa}$, which implies
\[
|X|\leq \dfrac{b-\kappa |P|}{1+\kappa}\footnote{This means that the ball centered at zero with radius $\dfrac{b-\kappa |P|}{1+\kappa}$ contains the oval. In fact, this is the smallest ball centered at zero that contains the oval because the point $X=\lambda P$ with $\lambda=-\dfrac{b-\kappa |P|}{(1+\kappa)|P|}$ satisfies the equation $F(X)=b$ (notice that $\lambda\leq 0$ and $\lambda-1=\dfrac{-b-|P|}{(1+\kappa)|P|}< 0$ since $-b\leq -\kappa |P|<|P|$ since $-1<\kappa<0$).}
\]
since $-1<\kappa<0$.
Substituting $|X-P| = \dfrac{b-|X|}{\kappa}$ in the expression for $H$ above we obtain the following expression in terms of $\kappa, b,|X|$ and $|P|$:
\begin{align*}
&\dfrac{b(b^2 - \kappa^2 |P|^2)^2 + 6b(\kappa^2 -1)(b - \kappa |P|)(b + \kappa |P|) |X|^2}{4 \kappa^3} \\
&\quad + \dfrac{ 4 (\kappa^2 -1)\left(b^2 \left( \kappa^2 -2 \right) + \kappa^2 |P|^2 \right) |X|^3 - 3b (\kappa^2 -1)^2 |X|^4}{4 \kappa^3}:=G(|X|,b,\kappa),
\end{align*}
with $|X|\geq b$.
Thus the problem of finding the $b$'s for which $Q(X, (-F_y, F_x))\geq 0$ for all $X$, reduces to minimizing a polynomial of one variable $|X|$ over the range $\max \left\{\dfrac{b-\kappa |P|}{1-\kappa}, \dfrac{b+\kappa |P|}{1+\kappa},b\right\}\leq |X|\leq \dfrac{b-\kappa |P|}{1+\kappa}$. Notice that the value of the maximum is given by $M = \dfrac{b- \kappa |P|}{1-\kappa}$\footnote{Here we use that $b\leq |P|$, because the point $P$ must be outside the oval to have a problem physically meaningful. This means that the radius of the largest ball centered at zero and contained in the oval has radius $\dfrac{b- \kappa |P|}{1-\kappa}$.}. 
The derivative of $G$ with respect to the first variable $t=|X|$ equals
\[
\partial_t G(t,b,\kappa)=\dfrac{3 (-1 + \kappa^2) (b - t) t (b^2 - \kappa^2 |P|^2 + b (-1 + \kappa^2) t)}
{k^3},
\] 
and therefore if $b\neq 0$ then $\partial_t G(t,b,\kappa)=0$ when $t=0,b,$ or $t_b:=\dfrac{-b^2 + \kappa^2 |P|^2}{b (-1 + \kappa^2)}$. If $b=0$ then $\partial_t G(t,0,\kappa)=0$ when $t=0$.
Furthermore, the coefficient of $t^4$ in the original function $G(t,b,k)$ equals $L_b:=-\dfrac{3\,b}{4\, \kappa^3}(1-\kappa^2)^2$.
%
%

To optimize $G(t,b,\kappa)$ we proceed by analyzing the following cases.
\vskip 0.2in

{\bf Case $\kappa |P|\leq b<0$.}

Since $\kappa<0$, then the coefficient $L_b< 0$. We also have $t_b>0$. 
We write
\[
\partial_t G(t,b,\kappa)=-3\,b\,\dfrac{(\kappa^2-1)^2}{\kappa^3}\,t\,(t-b)\,(t-t_b).
\] 
For $t_b<t<\infty$ we have $\partial_t G(t,b,\kappa)<0$, and for $0<t<t_b$ we have $\partial_t G(t,b,\kappa)>0$.
We optimize the function $G$ on the interval
$$I = \left(\dfrac{b - \kappa |P|}{1- \kappa}, \dfrac{b-\kappa |P|}{1+\kappa}\right) \subset (0,t_b).$$

We have that $\partial_t G(t,b,\kappa)>0$ for $t\in I$ and so $G$ is strictly increasing on $I$. 
Hence we obtain
$$\max_{t \in I} G(t,b,\kappa) = G \left(\dfrac{b-\kappa |P|}{1+\kappa}, b , \kappa \right) 
= \dfrac{(b + |P|)^2 (b - \kappa |P|)^2 (b + |P| - \kappa |P|)}{(1 + k)^2},$$
and 
$$\min_{t \in I} G(t, b, \kappa) = G\left(\dfrac{b-\kappa |P|}{1-\kappa}, b , \kappa \right)
= \dfrac{(b - |P|)^2 (b - \kappa |P|)^2 (-b + (1+\kappa)|P| )}{(-1 + k)^2}.$$

Note that this minimum is positive since $-1< \kappa < 0$ and the maximum is positive as well since $b \geq \kappa |P|$. Hence the maximum and minimum have the same sign and as such the curve is convex for this range of $b$.

\vskip 0.2in
{\bf Case $0< b \leq (1+\kappa)|P|$.}

By calculation we have that $L_b>0$.
Let us analyze the sign of $t_b$. Since $b>0$ and $-1<\kappa<0$, we have $t_b>0$ if and only if $b>-\kappa|P|$, and $t_b<0$ if and only if $b<-\kappa|P|$. Let us compare $-\kappa |P|$ with $(1+\kappa)|P|$. We have $-\kappa |P|\leq (1+\kappa)|P|$ if and only if $\kappa\leq -1/2$, and $-\kappa |P|> (1+\kappa)|P|$ if and only if $\kappa> -1/2$.
Therefore, if $\kappa>-1/2$ then $t_b<0$. On the other hand, if $\kappa\leq -1/2$, then $b\in (0,-\kappa |P|)\cup [-\kappa |P|, (1+\kappa)|P|)$. If $b\in (0,-\kappa |P|)$, then $t_b<0$. If $b\in [-\kappa |P|, (1+\kappa)|P|)$, then $t_b>0$.
Notice also that $t_b<b$ since $b<(1+\kappa)|P|$.
If $\kappa>-1/2$ then $t_b<0$ and $\partial_tG<0$ on $(0,b)$ while $\partial_tG>0$ on $(b,+\infty)$. 
Since $\dfrac{b-\kappa|P|}{1-\kappa}>b$, we have that the interval $I\subset (b,+\infty )$, and 
$$\min_{t \in I} G(t, b, \kappa) = G\left(\dfrac{b-\kappa |P|}{1-\kappa}, b , \kappa \right)
= \dfrac{(b - |P|)^2 (b - \kappa |P|)^2 (-b + (1+\kappa)|P| )}{(-1 + k)^2}.$$

If $\kappa\leq -1/2$ then $0<t_b<b$ and so $\partial_tG>0$ on $(0,t_b)$, $\partial_tG<0$ on $(t_b,b)$ and $\partial_tG>0$ on $(b,+\infty)$. Once again $I\subset (b,+\infty )$, and we obtain
$$\min_{t \in I} G(t, b, \kappa) = G\left(\dfrac{b-\kappa |P|}{1-\kappa}, b , \kappa \right)
= \dfrac{(b - |P|)^2 (b - \kappa |P|)^2 (-b + (1+\kappa)|P| )}{(-1 + k)^2}.$$

Note that this minimum is positive since $b < 0$. Furthermore we have
$$\max_{t \in I} G(t,b, \kappa) = G \left( \dfrac{b-\kappa |P|}{1+\kappa}, b,\kappa \right)$$
which is positive because $b \geq \kappa |P|$. Hence the maximum and minimum have the same sign and for this range of $b$'s the curve is also convex. 
\vspace{.2 in}


{\bf Case $(1+ \kappa)|P| < b < |P|$.}

In this case we have that $t_b < b < \dfrac{b-\kappa |P|}{1-\kappa}, \dfrac{b-\kappa |P|}{1+ \kappa}$ so that the interval $I = \left( \dfrac{b- \kappa |P|}{1-\kappa}, \dfrac{b-\kappa |P|}{1+\kappa} \right) \subset (b, \infty)$. Furthermore we have that $t_b > 0$ provided that $-1 < \kappa < -1/2$ and $-\kappa |P| < b< |P|$ or if $-1/2 < \kappa < 0$ and $(1+\kappa)|P| < b < |P|$. Moreover we have that $t_b < 0$ if $-1<\kappa < -1/2$ and $(1+\kappa)|P| < b < -\kappa |P|$. 

Also on the interval $I$ we have that $\partial_t G(t,b,\kappa) >0$ so that

$$\min_{t \in I} G(t, b, \kappa) = G \left( \dfrac{b- \kappa |P|}{1-\kappa}, b \kappa \right) = \dfrac{(b- |P|)^2(b- \kappa |P|)^2(-b +(1+\kappa)|P|)}{(-1+\kappa)^2} $$
which is negative since $b > (1+ \kappa)|P|$. Furthermore we have that

$$\max_{t \in I} G(t,b,\kappa) = G \left( \dfrac{b- \kappa |P|}{1+\kappa}, b, \kappa \right) = \dfrac{(b+ |P|)^2(b- \kappa |P|)^2(b + (1-\kappa)|P|)}{(1 +\kappa)^2} $$ which is positive since $-1<\kappa < 0$. Thus the minimum and maximum of $G$ on the interval have opposite signs which implies that the curve is neither convex nor concave for this range of $b$.

In dimension three, the above result still holds true because the oval is radially symmetric with respect to the axis $OP$. 
\vskip 0.2in
Finally suppose that $\kappa<-1$. 

We have $X\in \{X:|X|+\kappa|X-P|=b\}:=O$ if and only if $X-P\in \left\{Z:|Z|+\dfrac{1}{\kappa}|Z+P|=\dfrac{b}{\kappa}
\right\}:=O'$. The ovals $O$ and $O'$ have the same curvature, and 
by the $-1<\kappa<0$ case, $O'$ is convex provided $\dfrac{1}{\kappa}|P| < \dfrac{b}{\kappa} < \left(1+ \dfrac{1}{\kappa}\right)|P|$ and neither convex nor concave provided $\left(1+ \dfrac{1}{\kappa}\right)|P| < \dfrac{b}{\kappa} < |P|$. Multiplying both inequalities by $\kappa$ we obtain the desired range of $b$.

\end{proof}

\section{Fresnel Formulas for NIMs}\label{sec:maxwellandfresnelformulas}

%
To obtain the Fresnel formulas for NIMs materials we briefly review the calculations leading to the Fresnel formulas for standard materials, see  \cite[Section 1.5.2]{BW1959}.

The electric field is denoted by $\E$ and the magnetic field by $\B$. 
 These are three-dimensional vector fields:
 $\E=\E(\r, t)$ and $\B=\B(\r, t)$, where 
 $\r$ represents a point in 3 dimensional space $\r =(x,y,z)$.
 The way in which $\E$ and $\B$ interact is described by Maxwell's equations \cite{BW1959} and \cite[Sec. 4.8]{sommerfeld:electrodynamics}: 
\begin{align}
\nabla \times \E&=-\dfrac{\mu}{c}\dfrac{\partial \B}{\partial t},\label{faradaylawfree}\\
\nabla \times \B&=  \dfrac{2\pi}{c}\sigma \E+\dfrac{\epsilon}{c} \dfrac{\partial \E}{\partial t}\label{amperemaxwelllawfree}\\
\nabla \cdot (\epsilon \E)&=4\pi \rho\label{gausslawbis}\\
\nabla \cdot (\mu \B)&=0\label{magneticlawbis},
\end{align}
$c$ being the speed of light in vaccum. 

We consider plane waves solutions to the Maxwell equations \eqref{faradaylawfree}, \eqref{amperemaxwelllawfree}, \eqref{gausslawbis},
\eqref{magneticlawbis} with $\rho=0$ and $\sigma=0$, and having components of the form
\[
a \cos \left( \omega \left( t - \dfrac{\r \cdot {\bf s}}{v}\right)+\delta \right)
=
a \cos \left( \omega t - {\bf k}\cdot \r +\delta \right)
\]
with ${\bf k}=\dfrac{\omega}{v}{\bf s}$ and $a, \delta$ real numbers, and ${\bf s}$ is a unit vector.
If $\E(\r,t)=E({\bf k}\cdot \r-\omega t)$ and $\B(\r,t)=H({\bf k}\cdot \r-\omega t)$
solve the Maxwell equations \eqref{faradaylawfree}, \eqref{amperemaxwelllawfree}, \eqref{gausslawbis}, one can show that
\begin{equation}\label{eq:BequalskcrossE}
\B=\dfrac{c}{\mu \omega}({\bf k}\times \E), \text{ and }, \E=-\dfrac{c}{\epsilon \omega}({\bf k}\times \B).
\end{equation}

We assume that the incident vector ${\bf s}={\bf s^i}$ with 
\[
{\bf s}^i=\sin \theta_i \hat{\bf i}+ \cos \theta_i \hat{\bf k}.
\]
That is, ${\bf s}^i$ lives on the $xz$-plane and so the direction of propagation is perpendicular to the $y$-axis.
Also the boundary between the two media is the $xy$-plane, $\nu$ denotes the normal vector to the boundary at the point $P$, that is, $\nu$ is on the $z$-axis, and  
$\theta_i$ is the angle between the normal vector $\nu$ and the
incident direction ${\bf s}^i$ (as usual $\hat{\bf i},\hat{\bf j}, \hat{\bf k}$ denote the unit coordinate vectors). Let $I_\parallel$ and $I_\perp$ denote the parallel and perpendicular components, respectively, of the incident field. The electric field corresponding to this incident field is
\begin{equation}\label{eq:incidentfieldkappa<0}
\E^i(\r,t)=\left(-I_{\parallel} \cos \theta_i, I_{\perp}, I_{\parallel}\sin \theta_i\right)\, \cos \left( \omega \left( t - \dfrac{\r \cdot {\bf s}^i}{v_1}\right) \right)
=\E_0^i\, \cos \left( \omega \left( t - \dfrac{\r \cdot {\bf s}^i}{v_1}\right) \right),
\end{equation}
with 
\[
v_1=\dfrac{c}{\sqrt{\epsilon_1 \mu_1}}.
\]
From \eqref{eq:BequalskcrossE}, the magnetic field is then
\begin{align*}
\B^i(\r,t)&=\sqrt{\dfrac{\epsilon_1}{\mu_1}}({\bf s}^i\times \E^i)\\
&=\sqrt{\dfrac{\epsilon_1}{\mu_1}}\left(-I_{\perp} \cos \theta_i, -I_{\parallel}, I_{\perp}\sin \theta_i\right)\, \cos \left( \omega \left( t - \dfrac{\r \cdot {\bf s}^i}{v_1}\right) \right)
=
\B_0^i\, \cos \left( \omega \left( t - \dfrac{\r \cdot {\bf s}^i}{v_1}\right) \right).
\end{align*}

Let us now introduce ${\bf s}^t$, the direction of propagation of the transmitted wave, and let $\theta_t$ be the angle between the normal $\nu$ and ${\bf s}^t$. Similarly, ${\bf s}^r$ is the direction of propagation of the reflected wave and $\theta_r$ is the angle between the normal $\nu$ and ${\bf s}^r$. We have that 
${\bf s}^r=\sin \theta_r \hat{\bf i}+ \cos \theta_r \hat{\bf k}=\sin \theta_i \hat{\bf i}- \cos \theta_i \hat{\bf k}$.
Then the corresponding electric and magnetic fields corresponding to transmission are
\begin{align*}
\E^t(\r,t)&=\left(-T_{\parallel} \cos \theta_t, T_{\perp}, T_{\parallel}\sin \theta_t\right)\, \cos \left( \omega \left( t - \dfrac{\r \cdot {\bf s}^t}{v_2}\right) \right) =
\E_0^t\, \cos \left( \omega \left( t - \dfrac{\r \cdot {\bf s}^t}{v_2}\right) \right) \\
\B^t(\r,t)&=
\sqrt{\dfrac{\epsilon_2}{\mu_2}}\left(-T_{\perp} \cos \theta_t, -T_{\parallel}, T_{\perp}\sin \theta_t\right)\, \cos \left( \omega \left( t - \dfrac{\r \cdot {\bf s}^t}{v_2}\right) \right)
=
\B_0^t\, \cos \left( \omega \left( t - \dfrac{\r \cdot {\bf s}^t}{v_2}\right) \right);
\end{align*}
with 
\[
v_2=\dfrac{c}{\sqrt{\epsilon_2 \mu_2}}.
\]
There are similar formulas for the fields $\E^r(\r,t), \B^r(\r,t)$ corresponding to reflection. 
Since the tangential components of the electric and magnetic fields are continuous across the boundary, see e.g. \cite[Section 1.1.3]{BW1959},
%
we obtain the equations
 \[
 I_{\perp}+R_{\perp}=T_\perp,
 \qquad
 \cos \theta_i (I_\parallel - R_\parallel )=\cos \theta_t T_\parallel;
 \]
and 
 \[
 \dfrac{I_{\parallel}}{\sqrt{\dfrac{\epsilon_1}{\mu_1}}}+\dfrac{R_{\parallel}}{\sqrt{\dfrac{\epsilon_1}{\mu_1}}}=\dfrac{T_\parallel}{\sqrt{\dfrac{\epsilon_2}{\mu_2}}},
 \qquad
 \cos \theta_i \left(\dfrac{I_\perp}{\sqrt{\dfrac{\epsilon_1}{\mu_1}}} - \dfrac{R_\perp}{\sqrt{\dfrac{\epsilon_1}{\mu_1}}} \right)=\cos \theta_t \dfrac{T_\perp}{\sqrt{\dfrac{\epsilon_2}{\mu_2}}}.
 \]
 The wave impedance of the medium is defined by
 \[
 z=\sqrt{\dfrac{\mu}{\epsilon}},
 \]
 so setting $z_j=\sqrt{\dfrac{\mu_j}{\epsilon_j}} $, $j=1,2$,
 and solving the last two sets of linear equations yields
 \begin{align}\label{eq:fresnelformulasclassical1}
 T_\parallel &=\dfrac{2 \,z_1 \cos \theta_i}{z_2 \cos \theta_i + z_1 \cos \theta_t}\,I_\parallel\qquad 
 T_\perp =\dfrac{2\, z_1 \cos \theta_i}{z_1 \cos \theta_i + z_2 \cos \theta_t}\,I_\perp\\
 R_\parallel &=\dfrac{ z_2 \cos \theta_i-z_1\cos \theta_t}{z_2 \cos \theta_i + z_1 \cos \theta_t}\,I_\parallel\qquad
 R_\perp =\dfrac{ z_1 \cos \theta_i-z_2\cos \theta_t}{z_1 \cos \theta_i + z_2 \cos \theta_t}\,I_\perp.\label{eq:fresnelformulasclassical2} 
 \end{align}
 These are the {\it Fresnel equations} expressing the amplitudes of the reflected and transmitted waves in terms of the amplitude of the incident wave.

We now apply this calculation to deal with NIMS.
We will replace ${\bf s^i}$ by $x$ and ${\bf s^t}$ by $m$, and we also set 
\[
\kappa=-\dfrac{\sqrt{\epsilon_2\mu_2}}{\sqrt{\epsilon_1\mu_1}}.
\] 
Recall $\nu$ is the normal
to the interface.
We have $\cos \theta_i=x\cdot \nu$ and $\cos \theta_t=m\cdot \nu$.
Suppose medium I has $\epsilon_1>0$ and $\mu_1>0$; and medium II
has $\epsilon_2<0$ and $\mu_2<0$. In other words, medium I is "right-handed"
and medium II is "left-handed". This means in medium II the refracted ray $m$ stays on the same side 
of the incident ray $x$.

From the Snell law formulated in \eqref{eq:snelllawnegative} (notice that $\kappa<0$), we have
$
x-\kappa m=\lambda \nu,
$
so the Fresnel equations \eqref{eq:fresnelformulasclassical1} and \eqref{eq:fresnelformulasclassical2} take the form
 \begin{align}\label{eq:fresnelformulaourwriting}
 T_\parallel &=\dfrac{2 \, z_1\,x\cdot \nu}{(z_2 x+ z_1 m)\cdot \nu}\,I_\parallel
 =
\dfrac{2 \, z_1\,x\cdot (x-\kappa \,m)}{(z_2 x+ z_1 m)\cdot (x-\kappa \,m) }\,I_\parallel \\
 T_\perp &
 =\dfrac{2 \, z_1\, x\cdot \nu}{(z_1 x+ z_2 m)\cdot \nu}\,I_\perp
=\dfrac{2 \, z_1\,x\cdot (x-\kappa \,m)}{ (z_1 x+ z_2 m)\cdot (x-\kappa \,m)}\,I_\perp \notag
 \\
 R_\parallel &=\dfrac{ (z_2 x- z_1 m)\cdot \nu}{ (z_2 x+ z_1 m)\cdot \nu}\,I_\parallel
 =
 \dfrac{ (z_2 x- z_1 m)\cdot (x-\kappa \,m)}{ (z_2 x+ z_1 m)\cdot (x-\kappa \,m)}\,I_\parallel \notag\\
 R_\perp &=\dfrac{ (z_1 x- z_2 m)\cdot \nu}{(z_1 x+z_2 m)\cdot \nu}\,I_\perp
 =
  \dfrac{ (z_1 x- z_2 m)\cdot (x-\kappa \,m)}{(z_1 x+z_2 m)\cdot (x-\kappa \,m)}\,I_\perp. \notag
 \end{align}
Notice that the denominators of the perpendicular components are the same and likewise for the parallel components.

As an example suppose that $\epsilon_1, \mu_1$ are both positive and $\epsilon_2=-\epsilon_1$ and $\mu_2=-\mu_1$.
This is the so called mirror like material. Then $z_1=z_2$ and $\kappa=-1$, and so
$R_\parallel=R_\perp=0$ for all incident rays. This means that all the energy is transmitted and nothing internally reflected. Notice also that if $\kappa \neq  -1$ then there is a reflected wave, that is, $R_\perp$ and $R_\parallel$ may be different from zero. 

 \begin{remark}\rm
 The wave impedance, like the index of refraction, is a unique characteristic of the medium in consideration. However, unlike the refractive index $n$, the wave impedance $z$ remains \emph{ positive} for negative values of $\epsilon_i, \mu_i$. Also, compare equations \eqref{eq:fresnelformulasclassical2} with the table in \cite[pg. 765]{Ves2003} giving $R_\perp$. There the Fresnel formula for $r_\perp$ in the exact Fresnel formula column contains a misprint. We believe our calculations above to be correct and consistent with what is expected in the case of negative refraction. 
 \end{remark}

\subsection{The Brewster angle}
This is the case when $R_\parallel=0$, this means when 
\[
z_2 \cos \theta_i-z_1\cos \theta_t=0,
\]
which together with the Snell law ($\sin \theta_i=\kappa \,\sin \theta_t$) yields
\[
\left( \dfrac{z_2}{z_1}\right)^2\,\cos^2 \theta_i +\dfrac{1}{\kappa^2}\, \sin^2\theta_i=1=
\cos^2\theta_i +\sin^2 \theta_i.
\]
So
\[
\tan^2 \theta_i=\dfrac{\left( \dfrac{z_2}{z_1}\right)^2-1}{1-\dfrac{1}{\kappa^2}}
=\dfrac{\mu_2}{\mu_1}\left(\dfrac{\epsilon_1\mu_2-\epsilon_2\mu_1}{\epsilon_2\mu_2-\epsilon_1\mu_1}\right)
\]
if $1/\kappa^2\neq 1$, and the Brewster angle is $\theta_i$ with
\[
\tan \theta_i=\sqrt{\dfrac{\mu_2}{\mu_1}\left(\dfrac{\epsilon_1\mu_2-\epsilon_2\mu_1}{\epsilon_2\mu_2-\epsilon_1\mu_1}\right)}.
\]
 
Let us compare this value of $\theta_i$ with the case when $\kappa > 0$. As above the Brewster angle occurs when $R_\parallel = 0$. In this situation, the reflected and transmitted rays are orthogonal to each other, and via Snell's law, it follows that
 $$\tan \theta_i = n_2/ n_1$$
 which is consistent with what we would expect. 

\subsection{Fresnel coefficients for NIMs}
To calculate these coefficients we follow the calculations from \cite[Section 1.5.3]{BW1959}.
The Poynting vector is given by
$
\S=\dfrac{c}{4\pi}\E\times \B,
$
where $c$ is the speed of light in free space.
From \eqref{eq:BequalskcrossE} we get that 
\[
\S=\dfrac{c}{4\pi}\E\times (\dfrac{c}{\mu\omega}{\bf k}\times \E)=\dfrac{c}{4\pi}\sqrt{\dfrac{\epsilon}{\mu}}\E\times {\bf s}\times \E,
\]
where ${\bf k}=\dfrac{\omega}{v}{\bf s}$, ${\bf s}$ a unit vector, and $v=\dfrac{c}{\sqrt{\epsilon \mu}}$.
Using the form of the incident wave \eqref{eq:incidentfieldkappa<0}, the amount of energy $J^i$ of the incident wave $\E^i$ flowing through a unit area of the boundary per second,
 is then 
\[
J^i=|\S^i|\cos \theta_i=\dfrac{c}{4\pi}\sqrt{\dfrac{\epsilon_1}{\mu_1}} |\E_0^i|^2\cos \theta_i.
\]
Similarly, the amount of energy in the reflected and transmitted waves (also given in the previous section) leaving a unit area of the boundary per second is given by
\begin{align*}
J^r&=|\S^r|\cos \theta_i=\dfrac{c}{4\pi}\sqrt{\dfrac{\epsilon_1}{\mu_1}}|\E_0^r|^2\cos \theta_i\\
J^t&=|\S^t|\cos \theta_t=\dfrac{c}{4\pi}\sqrt{\dfrac{\epsilon_2}{\mu_2}}|\E_0^t|^2\cos \theta_t.
\end{align*}
The reflection and transmission coefficients are defined by
\[
\mathcal R=\dfrac{J^r}{J^i}=\left( \dfrac{|\E_0^r|}{|\E_0^i|}\right)^2,
\text{ and }\mathcal T=\dfrac{J^t}{J^i}=\sqrt{\dfrac{\epsilon_2\mu_1}{\epsilon_1\mu_2}}\dfrac{\cos \theta_t}{ \cos \theta_i}\left( \dfrac{|\E_0^t|}{|\E_0^i|}\right)^2.
\]
By conservation of energy or by direct verification, we have $\mathcal R + \mathcal T=1$.

For the case when no polarization is assumed, 
we have from Fresnel's equations \eqref{eq:fresnelformulaourwriting} that
\[
|\E_0^r|^2=R_{\parallel}^2+ R_{\perp}^2
=\left[\dfrac{ (z_2\,  x -z_1\,m)\cdot (x-\kappa \,m)}{ (z_2 \,  x + z_1\,m)\cdot (x-\kappa \,m)}\right]^2\,I_\parallel^2 +  
\left[\dfrac{ (z_1\,x -z_2 \, m)\cdot (x-\kappa \,m)}{(z_1\,x + z_2 \, m)\cdot (x-\kappa \,m)}\right]^2\,I_\perp^2,
\]
and so
\begin{align*}
\mathcal R
&=
\left(\dfrac{|\E_0^r|}{|\E_0^i|}\right)^2=
\dfrac{R_{\parallel}^2+ R_{\perp}^2}{I_{\parallel}^2+ I_{\perp}^2}\\
&=
\left[\dfrac{ (z_2\,  x -z_1\,m)\cdot (x-\kappa \,m)}{ (z_2 \,  x + z_1\,m)\cdot (x-\kappa \,m)}\right]^2\,\dfrac{I_\parallel^2}{I_{\parallel}^2+ I_{\perp}^2} +  
\left[\dfrac{ (z_1\,x -z_2 \, m)\cdot (x-\kappa \,m)}{(z_1\,x + z_2 \, m)\cdot (x-\kappa \,m)}\right]^2\,\dfrac{I_\perp^2}{I_{\parallel}^2+ I_{\perp}^2}\\
&=
 \left[ \dfrac{\left( \sqrt{\dfrac{\mu_2}{\epsilon_2}}-\sqrt{\dfrac{\epsilon_2\mu_2}{\epsilon_1^2}}\right)-\left( \sqrt{\dfrac{\mu_1}{\epsilon_1}}-\sqrt{\dfrac{\mu_2^2}{\epsilon_1\mu_1}}\right)x\cdot m}{\left( \sqrt{\dfrac{\mu_2}{\epsilon_2}}+\sqrt{\dfrac{\epsilon_2\mu_2}{\epsilon_1^2}}\right)+\left( \sqrt{\dfrac{\mu_1}{\epsilon_1}}+\sqrt{\dfrac{\mu_2^2}{\epsilon_1\mu_1}}\right)x\cdot m} \right]^2\,\dfrac{I_\parallel^2}{I_{\parallel}^2+ I_{\perp}^2}\\
 &\qquad +  
\left[ \dfrac{\left( \sqrt{\dfrac{\mu_1}{\epsilon_1}}-\sqrt{\dfrac{\mu_2^2}{\epsilon_1\mu_1}}\right)-\left( \sqrt{\dfrac{\mu_2}{\epsilon_2}}-\sqrt{\dfrac{\epsilon_2\mu_2}{\epsilon_1^2}}\right)x\cdot m}{\left( \sqrt{\dfrac{\mu_1}{\epsilon_1}}+\sqrt{\dfrac{\mu_2^2}{\epsilon_1\mu_1}}\right)-\left( \sqrt{\dfrac{\mu_2}{\epsilon_2}}-\sqrt{\dfrac{\epsilon_2\mu_2}{\epsilon_1^2}}\right)x\cdot m} \right]^2\,\dfrac{I_\perp^2}{I_{\parallel}^2+ I_{\perp}^2}
\end{align*}
which is a function only of $x\cdot m$.
We assume in these formulas that the product $\epsilon_i\mu_i>0$, $i=1,2$.
In principle the coefficients $I_\parallel$ and $I_\perp$ might depend on the direction $x$, in other words, for each direction $x$ we would have a wave that changes its amplitude with the direction of propagation.
The energy of the incident wave would be $f(x)=|\E_0^i|^2=I_\parallel(x)^2+I_\perp(x)^2$.
Notice that if the incidence is normal, that is, $x=m$, then
$\mathcal R=\left( \dfrac{z_2-z_1}{z_1+z_2}\right)^2$ which shows that even for radiation normal to the interface we may lose energy by reflection.

\begin{remark}\rm
The Fresnel coefficients $\mathcal R$ and $\mathcal T$ being written in terms of $x$ and $m$ 
as above are useful for studying some refraction problems in geometric optics. 
In fact, when $\kappa >0$, it was shown in \cite{CG2013} that it is possible to construct a surface interface between media $I$ and $II$ in such a way that both the incident radiation and transmitted energy are prescribed, and that loss of energy due to internal reflection across the interface is taken into account. 
\end{remark}

\section{Conclusion}
We have given a vector formulation of the Snell law for waves passing between two homogeneous and isotropic materials when one of them has negative refractive index, that is, is left-handed. This formulation was used to find surfaces having the uniform refraction property both in the far field and near field cases.
In the near field case, in contrast with the case when both materials are standard, these surfaces can be neither convex nor concave and can wrap around the target. A quantitative  analysis in terms of the parameters defining the surfaces has been carried out.
We have used the vector formulation of Snell's law to find expressions for the Fresnel formulas and coefficients for NIMs.
We expect these formulas to be useful in the design of surfaces separating materials, one of them a NIM, that refract radiation with prescribed amounts of energy given in advance. This will be done in future work and requires more sophisticated mathematical tools.

\end{document}